\documentclass[letterpaper, 10 pt, conference]{ieeeconf}  

\IEEEoverridecommandlockouts                              

\overrideIEEEmargins                                      



\usepackage{graphicx}
\usepackage{amsmath}
\usepackage{mathrsfs}
\usepackage{mathtools}
\DeclareMathOperator*{\argmax}{argmax}

\usepackage{yhmath}
\usepackage{amssymb}
\usepackage{cite}
\usepackage{url}
\usepackage{subcaption}

\title{\LARGE \bf
Markov Potential Game Construction and Multi-Agent Reinforcement Learning with Applications to Autonomous Driving
}

\author{Huiwen Yan and Mushuang Liu
\thanks{This work was supported by DARPA Young Faculty Award with the grant number D24AP00321.}
\thanks{Huiwen Yan and Mushuang Liu are with the Department of Mechanical Engineering at Virginia Tech, Blacksburg, VA, USA
        {\tt\small huiweny@vt.edu,  mushuang@vt.edu}}%
}

\begin{document}

\newtheorem{thm}{Theorem}
\newtheorem{remark}{Remark}
\newtheorem{lemma}{Lemma}
\newtheorem{prop}{Proposition}
\newtheorem{defn}{Definition}
\newtheorem{condi}{Condition}
\newtheorem{assump}{Assumption}

\maketitle
\thispagestyle{empty}
\pagestyle{empty}

\begin{abstract}
Markov games (MGs) provide a mathematical foundation for multi-agent reinforcement learning (MARL), enabling self-interested agents to learn their optimal policies while interacting with others in a shared environment. However, due to the complexities of an MG problem, seeking (Markov perfect) Nash equilibrium (NE) is often very challenging for a general-sum MG. Markov potential games (MPGs), which are a special class of MGs, have appealing properties such as guaranteed existence of pure NEs and guaranteed convergence of gradient play algorithms, thereby leading to desirable properties for many MARL algorithms in their NE-seeking processes. However, the question of how to construct MPGs has remained open. This paper provides sufficient conditions on the reward design and on the Markov decision process (MDP), under which an MG is an MPG. Numerical results on autonomous driving applications are reported. 

\end{abstract}

\section{INTRODUCTION}
Reinforcement learning (RL) has demonstrated  success in diverse applications, e.g., resource allocation\cite{resourceAllocation}, energy management\cite{energyManagement} and robotics\cite{robotics}. The RL problem is often modeled as an MDP\cite{mdpBasis}, where a single agent interacts with the environment to iteratively update its policy until the optimal \cite{singleAgentRL1},\cite{singleAgentRL2}. However, modern complex systems are often composed of multiple decision-makers/agents, e.g., power systems \cite{powerSystem}, transportation systems \cite{potentialgame},\cite{transportationSystem}, and human-robot interaction systems \cite{humanRobotInteraction}. The interactions among agents need to be modeled.


To characterize agents' interactions in multi-agent systems (MASs), Markov games have been recently explored \cite{markovGameinComputationOffloading},\cite{markovGame2},\cite{markovGame4},\cite{markovGame5}. One desired outcome in an MG is the Nash equilibrium, which represents a stable status such that no agent has the incentive to unilaterally change their policy \cite{defNE}. To solve an MG, multi-agent reinforcement learning (MARL) is needed. Many existing MARL algorithms have been successful in reaching a stationary point, which is a necessary condition for an NE. Specifically, a Nash deep Q-network is developed in \cite{nashDeepQ-network} to handle the complexity and coordination challenges of large-scale traffic signal control. An actor-critic structure is used in \cite{existMARL2} to approximate the Q-function based on each agent's local information to address the curse of dimensionality in MASs.  The optimization and convergence properties of gradient-based algorithms for MGs are rigorously analyzed in \cite{gradienPlayLiNa} to provide theoretical foundations of MARL. It is shown in \cite{gradienPlayLiNa} that the convergence to a NE of a MARL algorithm is typically very challenging for a general-sum MG. 



One possible approach to address the NE-seeking challenge is to formulate the MG as a Markov potential game (MPG). An MPG extends the static potential games to a dynamic setting with state transitions. In a static potential game, a unilateral deviated action by an agent leads to the same amount of change in the potential function and in  the agent's reward function\cite{potentialgame}. Likewise, in an MPG, there exists a potential function that tracks the change of each agent's cumulative rewards. An MPG has appealing properties such as the guaranteed existence of at least one pure-strategy NE and the assured convergence to an NE under gradient play \cite{gradienPlayLiNa}. 
However, an open question remains: given an MAS, how to construct an MPG\cite{gradienPlayLiNa}.


In this paper, we develop sufficient conditions under which an MG is an MPG. 
The contributions of this paper include:
\begin{enumerate}
    \item We provide sufficient conditions on the reward design and on the MDP such that an MG is an MPG.
    \item We apply the MDP and MARL framework to autonomous driving applications. Statistical studies are conducted to evaluate the performance. 
    \item Comparative results between single-agent RL and MARL are provided, highlighting better robustness performance of the MPG-based MARL. 
\end{enumerate}

The remainder of this paper is organized as follows. Section \ref{Section2} introduces preliminaries on MGs, MARL, and relevant solution concepts. Section \ref{Section3} defines MPGs and provides the MPG construction approach. 
Section \ref{Section4} reports the numerical results using autonomous driving as an example, and Section \ref{Section5} concludes the paper.

\section{MARKOV GAME AND MULTI-AGENT REINFORCEMENT LEARNING}\label{Section2}
We define Markov games in Section \ref{sec:MG} and multi-agent RL in Section \ref{sec:MARL}.

\subsection{Markov Game}\label{sec:MG}

A Markov game is defined as a tuple $\mathcal{M} = (\mathcal{N}, \mathcal{S}, \mathcal{A} , P, r, \gamma, \rho)$, where $\mathcal{N} = \{1, 2, \cdots, N\}$ is the set of agents; $\mathcal{S}=\mathcal{S}_1\times\cdots\times\mathcal{S}_N$ is a finite set of states and $\mathcal{A} = \mathcal{A}_1\times\cdots\times\mathcal{A}_N$ is a finite set of actions, where $\mathcal{S}_i$ and $\mathcal{A}_i$ represents the state and action space for each agent $i \in \mathcal{N}$, respectively. The transition model is represented by $P$, where $P(s'|s, a)$ is the probability of transitioning into state $s'$ from $s$ when  $a= (a_1,\cdots, a_N)$ is taken. The reward function $r = (r_1, \cdots, r_N)$ assigns a reward $r_i:\mathcal{S}\times\mathcal{A}\rightarrow\mathbb{R}$ to each agent $i$. The discount factor $\gamma \in [0,1)$ weighs future versus immediate rewards, and $\rho$ is the distribution of the initial state.

Agents select actions based on a policy function $\pi:\mathcal{S}\rightarrow\Delta(\mathcal{A})$, where $\Delta(\mathcal{A})$ is the probability simplex. Consider a decentralized policy $\pi= \pi_1 \times \cdots \times \pi_N$, where each agent takes its own action independently, regardless of other agents' decisions. In other words, at a time step $t$, given the global state $s_t=(s_{1,t},\cdots,s_{N,t})$ and joint actions $a_t=(a_{1,t},\cdots,a_{N,t})$, one has:
\begin{equation}\label{eq:gloabalPolicy}
\pi(a_t|s_t)=\prod_{i=1}^{N}\pi_i(a_{i,t}|s_t).
\end{equation}

We consider a direct parameterization to each agent's policy with $\theta_i$:
\begin{equation}\label{eq:parametrization}
    \pi_{i,\theta_i}(a_i|s)=\theta_{i,(s,a_i)},\quad i=1,2,\cdots,N.
\end{equation}
With a slight abuse of notation, we may use $\theta_i$ and $\theta$ to refer to the parameterized policy $ \pi_{i,\theta_i}$ and $\pi_{\theta}$ respectively when no confusion.
Here $\theta_i\in\Delta(\mathcal{A}_i)^{|\mathcal{S}|}$ with $|\mathcal{S}|$ being the cardinality of $\mathcal{S}$. We denote the feasible set of $\theta_i$ and $\theta$ as $\mathcal{X}_i=\Delta(\mathcal{A}_i)^{|\mathcal{S}|}$ and $\mathcal{X}=\mathcal{X}_1\times\cdots\times\mathcal{X}_N$, respectively.

We assume that the agents can observe the overall state and action information. We denote agent $i$'s trajectory as $\tau=(s_t,a_t,r_{i,t})_{t=0}^{\infty}$, where $a_t\sim\pi_{\theta}(\cdot|s_t),s_{t+1}\sim P(\cdot|s_t,a_t)$.  The value function of agent $i$, $V_i^{\theta}:\mathcal{S}\rightarrow\mathbb{R}$, is defined as the discounted sum of future rewards from the initial state, i.e.,
\begin{equation}\label{eq:valueFunction}
    V_i^{\theta}(s)\coloneqq\mathbb{E}\Biggl[\sum_{t=0}^{\infty}\gamma^{t}r_i(s_t,a_t)\Bigg|\pi_{\theta},s_0=s\Biggr].
\end{equation}
We define agent $i$'s total rewards $J_i:\mathcal{X}\rightarrow\mathbb{R}$ as
\begin{equation}\label{eq:objectiveFunction}
    J_i(\theta)=J_i(\theta_i,\theta_{-i})=J_i(\theta_1,\cdots,\theta_N)\coloneqq\mathbb{E}_{s_0\sim\rho}V_i^{\theta}(s_0),
\end{equation}
where $-i$ represents the set of all other agents except for agent $i$.
For agent $i$, we use $\nabla_{\theta_i}J_i(\theta_i,\theta_{-i})$ to represent the gradient of the total rewards with respect to its policy. 

A Nash Equilibrium solution is a policy profile where no agent has the incentive to deviate from their policy.
\begin{defn}\label{df:nashEquilibrium}(Nash equilibrium \cite{fudenberg1991game})
    A policy $\theta^*=(\theta_1^*,\cdots,\theta_N^*)$ is called a Nash equilibrium if
    \begin{equation}
    J_i(\theta_i^*,\theta_{-i}^*)\geq{J_i(\theta_i',\theta_{-i}^*)},\quad\forall \theta_i' \in \mathcal{X}_i,\quad i \in \mathcal{N}.
    \end{equation}

The NE is called a strict NE if the inequality is strictly satisfied for any deviated policy $\theta_i'\neq\theta_i^{*}\in\mathcal{X}_i$ and any agent $i\in\mathcal{N}$. If the NE $\theta^{*}$ is deterministic, it is a pure NE; otherwise, it is a mixed NE.
\end{defn}

We define the discounted visitation measure  $d_{\theta}$ under a given policy over the states $s$ as \cite{agarwal2019reinforcement}: 
\begin{equation}\label{eq:visitationMeasure}
    d_{\theta}(s)\coloneqq\mathbb{E}_{s_0\sim\rho}(1-\gamma)\sum_{t=0}^{\infty}\gamma^{t}\text{Pr}^{\theta}(s_t=s|s_0),
\end{equation}\\
where $\text{Pr}^{\theta}(s_t=s|s_0)$ indicates the probability of state $s$ being visited when the agents are initialized by $s_0$ and are with the policy $\pi_{\theta}$.

We make the following assumption throughout the paper.
\begin{assump}\label{as:positiveProbability}
    The MG $\mathcal{M}$ satisfies: $d_{\theta}(s)>0, \forall s \in \mathcal{S}, \forall \theta \in \mathcal{X}$.
\end{assump}
This assumption requires that each state in the state space is visited at least once, which is commonly used in RL convergence analysis \cite{globalConvergence}.

\subsection{Multi-Agent Reinforcement Learning}\label{sec:MARL}

By applying direct distributed parameterization to the decentralized structure, we can obtain the gradient ascent algorithm for each player:
\begin{equation}\label{eq:gradientPlay}
    \theta_i^{(t+1)}=\text{Proj}_{\mathcal{X}_i}(\theta_i^{(t)}+\eta\nabla_{\theta_i}J_i(\theta^{(t)})),\enspace\eta>0,
\end{equation}
where $\eta$ represents the learning rate.

\begin{defn}\label{df:firstOrderStationaryPolicy}
(First-order stationary policy\cite{gradienPlayLiNa}) A policy $\theta^*=(\theta_1^*, \cdots, \theta_N^{*})$ is called a first-order stationary policy if $(\theta_i'-\theta_i^{*})^{\top}\nabla_{\theta_i}J_i(\theta^{*})\leq0,\enspace\forall\theta_i'\in\mathcal{X}_i,\enspace i\in \mathcal{N}$.
\end{defn}

Next, we introduce the gradient domination property, which shall play an important role in showing the equivalence between the NE and first-order stationary policy.
\begin{lemma}\label{le:gradientDomination}
    (Gradient domination\cite{gradienPlayLiNa}) For direct distributed parameterization \eqref{eq:parametrization}, the following inequality holds for any $\theta=(\theta_1,\cdots,\theta_N)\in \mathcal{X}$ and any $\theta_i'\in\mathcal{X}_i,i\in\mathcal{N}$:
    \begin{equation}\label{eq:gradientDomination}
        J_i(\theta_i',\theta_{-i})-J_i(\theta_i,\theta_{-i})\leq\left\|\frac{d_{\theta'}}{d_\theta}\right\|_{\infty}\max_{\overline{\theta}_i\in\mathcal{X}_i}(\overline{\theta}_i-\theta_i)^{\top}\nabla_{\theta_i}J_i(\theta),
    \end{equation}
    where$\|\frac{d_{\theta'}}{d_\theta}\|_{\infty}\coloneqq\max_s\frac{d_{\theta'}(s)}{d_{\theta}(s)}$, and $\theta'=(\theta_i',\theta_{-i})$.
\end{lemma}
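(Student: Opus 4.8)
The plan is to reduce the two-policy comparison to a single-agent argument and then chain together the performance difference lemma, a change-of-measure step, and the closed form of the policy gradient under the direct parameterization \eqref{eq:parametrization}.

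First I would fix $\theta_{-i}$ and note that, because the joint policy factorizes as in \eqref{eq:gloabalPolicy}, varying only $\theta_i$ is equivalent to solving a single-agent MDP for agent $i$ with induced transition kernel $\bar P(s'|s,a_i)=\sum_{a_{-i}}\prod_{j\neq i}\pi_{j,\theta_j}(a_j|s)P(s'|s,a_i,a_{-i})$ and induced reward $\bar r_i(s,a_i)=\sum_{a_{-i}}\prod_{j\neq i}\pi_{j,\theta_j}(a_j|s)r_i(s,a_i,a_{-i})$; crucially, the discounted state process it generates coincides with that of the original game, so its discounted visitation measure is exactly $d_\theta$ from \eqref{eq:visitationMeasure}. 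In this reduced MDP I would introduce the state–action value $Q_i^\theta(s,a_i)$ and the advantage $A_i^\theta(s,a_i)=Q_i^\theta(s,a_i)-V_i^\theta(s)$, so that $\sum_{a_i}\pi_{i,\theta_i}(a_i|s)A_i^\theta(s,a_i)=0$ for every $s$.

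Next I would apply the performance difference lemma to $\theta'=(\theta_i',\theta_{-i})$ and $\theta=(\theta_i,\theta_{-i})$ in this reduced MDP, obtaining
\[
J_i(\theta_i',\theta_{-i})-J_i(\theta_i,\theta_{-i})=\frac{1}{1-\gamma}\sum_{s}d_{\theta'}(s)\sum_{a_i}\pi_{i,\theta_i'}(a_i|s)A_i^\theta(s,a_i),
\]
then bound the inner sum by $\max_{\bar\pi_i(\cdot|s)\in\Delta(\mathcal{A}_i)}\sum_{a_i}\bar\pi_i(a_i|s)A_i^\theta(s,a_i)$, which is nonnegative since the choice $\bar\pi_i=\pi_{i,\theta_i}$ makes it vanish. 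This nonnegativity is precisely what permits writing $d_{\theta'}(s)=d_\theta(s)\cdot\frac{d_{\theta'}(s)}{d_\theta(s)}$ and pulling out $\|d_{\theta'}/d_\theta\|_{\infty}$ in the correct direction, where Assumption \ref{as:positiveProbability} guarantees $d_\theta(s)>0$ so the ratio is well defined. Finally I would identify the resulting trailing factor with the gradient term: using the direct-parameterization policy gradient $\partial J_i/\partial\theta_{i,(s,a_i)}=\frac{1}{1-\gamma}d_\theta(s)Q_i^\theta(s,a_i)$, together with $\sum_{a_i}(\bar\theta_{i,(s,a_i)}-\theta_{i,(s,a_i)})=0$ for any $\bar\theta_i\in\mathcal{X}_i$, one may replace $Q_i^\theta$ by $A_i^\theta$ and cancel the $-\theta_i$ contribution, giving $(\bar\theta_i-\theta_i)^{\top}\nabla_{\theta_i}J_i(\theta)=\frac{1}{1-\gamma}\sum_s d_\theta(s)\sum_{a_i}\bar\pi_i(a_i|s)A_i^\theta(s,a_i)$; since $\mathcal{X}_i=\Delta(\mathcal{A}_i)^{|\mathcal{S}|}$ is a product over states, the maximization over $\bar\theta_i$ decouples state by state and reproduces exactly the trailing factor, and combining the displays yields \eqref{eq:gradientDomination}.

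I expect the main obstacle to be the bookkeeping in the last two steps rather than any deep difficulty: one must be careful that the change-of-measure inequality is applied in the direction dictated by the nonnegativity of the per-state maximized advantage, and that the state-wise decoupling of the maximization over the product simplex $\mathcal{X}_i$ is legitimate. The reduction to a single-agent MDP, the performance difference lemma, and the policy gradient formula are standard, but each must be restated carefully for decentralized policies so that $d_{\theta'}$ and $d_\theta$ appearing here are genuinely the visitation measures \eqref{eq:visitationMeasure} of the full Markov game.
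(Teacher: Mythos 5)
The paper itself offers no proof of Lemma \ref{le:gradientDomination}: it is quoted directly from \cite{gradienPlayLiNa}, so there is no in-paper argument to compare yours against. Evaluated on its own terms, your proof is correct, and it is exactly the standard derivation --- the single-agent gradient-domination argument of \cite{agarwal2019reinforcement} lifted to the game setting via the reduced-MDP construction, which is also how the cited source proceeds. The chain of steps is sound: freezing $\theta_{-i}$ and marginalizing $P$ and $r_i$ over $a_{-i}\sim\pi_{-i,\theta_{-i}}$ yields a single-agent MDP whose discounted visitation measure is the $d_\theta$ of \eqref{eq:visitationMeasure}; the performance difference lemma expresses $J_i(\theta_i',\theta_{-i})-J_i(\theta_i,\theta_{-i})$ as $\tfrac{1}{1-\gamma}\sum_s d_{\theta'}(s)\sum_{a_i}\pi_{i,\theta_i'}(a_i|s)A_i^{\theta}(s,a_i)$; upper-bounding the inner sum by its maximum over $\Delta(\mathcal{A}_i)$ makes each per-state term nonnegative (take $\bar\pi_i=\pi_{i,\theta_i}$), which is precisely what legitimizes replacing $d_{\theta'}(s)$ by $\bigl\|d_{\theta'}/d_\theta\bigr\|_\infty\, d_\theta(s)$ under Assumption \ref{as:positiveProbability}; and the direct-parameterization gradient formula together with $\sum_{a_i}\bigl(\bar\theta_{i,(s,a_i)}-\theta_{i,(s,a_i)}\bigr)=0$ and the state-wise decoupling of the maximization over the product simplex $\mathcal{X}_i=\Delta(\mathcal{A}_i)^{|\mathcal{S}|}$ identifies the resulting expression with $\max_{\overline{\theta}_i\in\mathcal{X}_i}(\overline{\theta}_i-\theta_i)^{\top}\nabla_{\theta_i}J_i(\theta)$. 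The two points you flag as delicate --- the direction of the change-of-measure step and the decoupling of the maximum (which only needs $d_\theta(s)\ge 0$) --- are indeed the only places where the argument could go wrong, and you handle both correctly. To make the write-up fully self-contained you would only need to state and verify the performance difference lemma and the gradient formula $\partial J_i/\partial\theta_{i,(s,a_i)}=\tfrac{1}{1-\gamma}\,d_\theta(s)\,\bar Q_i^{\theta}(s,a_i)$ for the marginalized $Q$-function, both of which are routine.
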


The inequality \eqref{eq:gradientDomination}  holds when $\theta_{-i}$ is fixed, therefore it leads to the following equivalence between the NE and first-order stationary policy.
\begin{thm}\label{thm:equivalence}
    (Theorem 1\cite{gradienPlayLiNa}) Under Assumption \ref{as:positiveProbability}, first-order stationary policies and NEs are equivalent.
\end{thm}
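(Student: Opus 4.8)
The plan is to prove the two implications of the equivalence separately. For the direction ``NE $\Rightarrow$ first-order stationary,'' I would use only the first-order necessary condition for optimality over a convex set, and neither Assumption \ref{as:positiveProbability} nor Lemma \ref{le:gradientDomination} is needed. Fix an agent $i$ and an arbitrary $\theta_i'\in\mathcal{X}_i$. Since $\mathcal{X}_i=\Delta(\mathcal{A}_i)^{|\mathcal{S}|}$ is convex, the point $\theta_i^{*}+t(\theta_i'-\theta_i^{*})$ lies in $\mathcal{X}_i$ for all $t\in[0,1]$, and the NE inequality of Definition \ref{df:nashEquilibrium} gives $J_i(\theta_i^{*}+t(\theta_i'-\theta_i^{*}),\theta_{-i}^{*})\le J_i(\theta_i^{*},\theta_{-i}^{*})$. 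A first-order Taylor expansion of $J_i$ in its first argument (which is smooth in $\theta_i$ under the direct parameterization \eqref{eq:parametrization}, being a rational function of the policy parameters with denominator bounded away from zero) yields $t\,(\theta_i'-\theta_i^{*})^{\top}\nabla_{\theta_i}J_i(\theta^{*})+o(t)\le 0$; dividing by $t>0$ and letting $t\to0^{+}$ gives $(\theta_i'-\theta_i^{*})^{\top}\nabla_{\theta_i}J_i(\theta^{*})\le 0$. As $i$ and $\theta_i'$ were arbitrary, $\theta^{*}$ is first-order stationary in the sense of Definition \ref{df:firstOrderStationaryPolicy}.

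For the converse, ``first-order stationary $\Rightarrow$ NE,'' I would invoke Lemma \ref{le:gradientDomination}. Suppose $\theta^{*}$ is first-order stationary, fix agent $i$ and $\theta_i'\in\mathcal{X}_i$, and set $\theta'=(\theta_i',\theta_{-i}^{*})$. The stationarity condition states $(\bar\theta_i-\theta_i^{*})^{\top}\nabla_{\theta_i}J_i(\theta^{*})\le 0$ for every $\bar\theta_i\in\mathcal{X}_i$; combined with the trivial choice $\bar\theta_i=\theta_i^{*}$, which gives value $0$, this shows $\max_{\bar\theta_i\in\mathcal{X}_i}(\bar\theta_i-\theta_i^{*})^{\top}\nabla_{\theta_i}J_i(\theta^{*})=0$. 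Substituting into \eqref{eq:gradientDomination} with $\theta=\theta^{*}$ gives $J_i(\theta_i',\theta_{-i}^{*})-J_i(\theta_i^{*},\theta_{-i}^{*})\le \left\|\frac{d_{\theta'}}{d_{\theta^{*}}}\right\|_{\infty}\cdot 0$. Here Assumption \ref{as:positiveProbability} together with the finiteness of $\mathcal{S}$ is essential: it guarantees $d_{\theta^{*}}(s)>0$ for every $s$, so that the coefficient $\left\|\frac{d_{\theta'}}{d_{\theta^{*}}}\right\|_{\infty}=\max_{s}\frac{d_{\theta'}(s)}{d_{\theta^{*}}(s)}$ is a well-defined finite nonnegative number and the product is exactly $0$. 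Hence $J_i(\theta_i',\theta_{-i}^{*})\le J_i(\theta_i^{*},\theta_{-i}^{*})$, and since $i$ and $\theta_i'$ were arbitrary, $\theta^{*}$ satisfies the NE condition of Definition \ref{df:nashEquilibrium}.

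The main obstacle, modest as it is, lies entirely in this converse direction and is precisely the role of Assumption \ref{as:positiveProbability}: without it $d_{\theta^{*}}(s)$ could vanish at some state, the visitation-measure ratio $\left\|\frac{d_{\theta'}}{d_{\theta^{*}}}\right\|_{\infty}$ could be infinite, and the right-hand side of \eqref{eq:gradientDomination} would become an indeterminate ``$\infty\cdot 0$,'' breaking the implication. I would therefore be careful to spell out explicitly why the ratio is finite before concluding. The forward direction is standard convex first-order analysis and should be routine, provided one notes the convexity of $\mathcal{X}_i$ and the differentiability of $J_i$ in $\theta_i$.
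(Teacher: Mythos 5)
Your proposal is correct and follows essentially the same route as the paper: the forward direction via a first-order Taylor expansion along the segment $(1-\delta)\theta_i^{*}+\delta\theta_i'$ using convexity of $\mathcal{X}_i$, and the converse via Lemma \ref{le:gradientDomination} together with Assumption \ref{as:positiveProbability} to ensure the visitation-ratio coefficient is finite. Your explicit remark that the maximum in the stationarity condition equals exactly zero (attained at $\bar\theta_i=\theta_i^{*}$), so the right-hand side is a genuine $0$ rather than an indeterminate product, is a slightly more careful rendering of the same argument.
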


The proof of Theorem \ref{thm:equivalence} follows from \cite{gradienPlayLiNa}. For completeness, we provide a brief sketch of the proof.

\begin{proof}
    First, we prove all Nash equilibria are first-order stationary policies. According to the Nash equilibrium definition (i.e., Definition \ref{df:nashEquilibrium}),  for any $\theta_i\in\mathcal{X}_i$:
    \begin{equation}\label{preq:thmEquivalence1}
    \begin{split}
        &J_i((1-\delta)\theta_i^{*}+\delta\theta_i,\theta_{-i}^{*})-J_i(\theta_i^{*},\theta_{-i}^{*})\\
        &=\delta(\theta_i-\theta_{-i}^{*})^{\top}\nabla_{\theta_i}J_i(\theta^{*})+o(\delta\left\|\theta_i-\theta_i^{*}\right\|)\leq0,\quad\forall\delta>0.
    \end{split}
    \end{equation}
    As $\delta\rightarrow0$, Eq. \eqref{preq:thmEquivalence1} gives the first-order stationarity condition:
    \begin{equation}\label{preq:thmEquivalence2}
        (\theta_i-\theta_i^{*})^{\top}\nabla_{\theta_i}J_i(\theta^{*})\leq0,\quad\forall\theta_i\in\mathcal{X}_i.
    \end{equation}
    
    Now we show that  first order stationary policies are Nash equilibria. From Assumption \ref{as:positiveProbability} we know that for any pair of parameters $\theta'=(\theta_i',\theta_{-i}^{*})$  and $\theta^{*}=(\theta_i^{*},\theta_{-i}^{*})$, we have $\left\|\frac{d_{\theta'}}{d_{\theta^{*}}}\right\|<+\infty$. From Lemma \ref{le:gradientDomination}, we have that for any first-order stationary policy $\theta^{*}$, \begin{equation}\label{preq:thmEquivalence3}
    \begin{split}
        &J_i(\theta_i',\theta_{-i}^{*})-J_i(\theta_i^{*},\theta_{-i}^{*})\\&\leq\left\|\frac{d_{\theta'}}{d_{\theta^{*}}}\right\|_{\infty}\max_{\overline{\theta}_i\in\mathcal{X}_i}(\overline{\theta}_i-\theta_i^{*})^{\top}\nabla_{\theta_i}J_i(\theta^{*})\leq0,
    \end{split}
    \end{equation}
    which completes the proof.
\end{proof}

Given the equivalence of NEs and stationary points, the remaining question is whether the MG can converge to a stationary policy under gradient play. One major reason for a failure is that the vector field $\{\nabla_{\theta_i}J_i(\theta)\}_{i=1}^{N}$ is not a conservative field, which can be addressed by making the MG an MPG \cite{gradienPlayLiNa}.

\section{MARKOV POTENTIAL GAME}\label{Section3}
We define an MPG and include its properties in Section \ref{sec:MPG&properties} and provide sufficient conditions for MPG construction in Section \ref{sec:sufficientConditions}.
\subsection{Definition and Properties of MPGs}\label{sec:MPG&properties}

\begin{defn}\label{df:markovPotentialGame}
    (Markov potential game \cite{globalConvergence}) An MG $\mathcal{M}$ is called an MPG if there exists a potential function $\phi:\mathcal{S}\times\mathcal{A}\rightarrow\mathbb{R}$ such that for any agent $i$ and any pair of policy parameters $(\theta_i',\theta_{-i}), (\theta_i,\theta_{-i})$ at any state $s$:
    \begin{equation}\label{eq:markovPotentialGame}
    \begin{split}
        &\mathbb{E}\Biggl[\sum_{t=0}^{\infty}\gamma^{t}r_i(s_t,a_t)\Bigg|\pi_{(\theta_i',\theta_{-i})},s_0=s\Biggr]\\
        &-\mathbb{E}\Biggl[\sum_{t=0}^{\infty}\gamma^{t}r_i(s_t,a_t)\Bigg|\pi_{(\theta_i,\theta_{-i})},s_0=s\Biggr]\\
        =&\mathbb{E}\Biggl[\sum_{t=0}^{\infty}\gamma^{t}\phi(s_t,a_t)\Bigg|\pi_{(\theta_i',\theta_{-i})},s_0=s\Biggr]\\
        &-\mathbb{E}\Biggl[\sum_{t=0}^{\infty}\gamma^{t}\phi(s_t,a_t)\Bigg|\pi_{(\theta_i,\theta_{-i})},s_0=s\Biggr].
    \end{split}
    \end{equation}
\end{defn}

The total potential function of the MPG can then be defined as:
\begin{equation}\label{eq:totalPotentialFunction}
\Phi(\theta)\coloneqq\mathbb{E}_{s_0\sim\rho}\left[\sum_{t=0}^{\infty}\gamma^t\phi(s_t,a_t)\Bigg|\pi_{(\theta_i,\theta_{-i})},s_0=s\right].
\end{equation}

Proposition \ref{pp:aloGlobalMaximum} ensures the existence of at least one pure NE in an MPG.
\begin{prop}\label{pp:aloGlobalMaximum}
    (Proposition 1\cite{gradienPlayLiNa}) For an MPG, there is at least one global maximum $\theta^{*}$ of the total potential function $\Phi$, i.e., $\theta^{*}\in\argmax_{\theta\in\mathcal{X}}\Phi(\theta)$, that is a pure NE.
\end{prop}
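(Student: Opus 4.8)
The plan is to prove the statement in three steps: (i) $\Phi$ attains a global maximum on $\mathcal{X}$; (ii) every global maximizer of $\Phi$ is a Nash equilibrium; and (iii) at least one global maximizer of $\Phi$ can be chosen deterministic, hence is a pure NE. For step (i), I would note that $\mathcal{X}=\mathcal{X}_1\times\cdots\times\mathcal{X}_N$ with $\mathcal{X}_i=\Delta(\mathcal{A}_i)^{|\mathcal{S}|}$ is a nonempty compact set (a finite product of probability simplices) and that $\Phi$ in \eqref{eq:totalPotentialFunction} is continuous in $\theta$, since for a fixed finite MG the map $\theta\mapsto\mathbb{E}[\sum_t\gamma^t\phi(s_t,a_t)\,|\,\pi_\theta]$ is a smooth function of the entries of $\theta$; the Weierstrass theorem then gives $\argmax_{\theta\in\mathcal{X}}\Phi(\theta)\neq\emptyset$.

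For step (ii), integrating the defining identity \eqref{eq:markovPotentialGame} of an MPG against $s_0\sim\rho$ and using \eqref{eq:valueFunction}, \eqref{eq:objectiveFunction}, and \eqref{eq:totalPotentialFunction} yields, for every agent $i$ and all $\theta_i,\theta_i'\in\mathcal{X}_i$ and $\theta_{-i}$,
\[
J_i(\theta_i',\theta_{-i})-J_i(\theta_i,\theta_{-i})=\Phi(\theta_i',\theta_{-i})-\Phi(\theta_i,\theta_{-i}).
\]
Taking $\theta=\theta^{*}\in\argmax\Phi$ and using $\Phi(\theta_i',\theta_{-i}^{*})\le\Phi(\theta^{*})$ gives $J_i(\theta_i',\theta_{-i}^{*})\le J_i(\theta_i^{*},\theta_{-i}^{*})$ for all $\theta_i'$ and all $i$, which is exactly Definition \ref{df:nashEquilibrium}.

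For step (iii), the identity above shows that with $\theta_{-i}$ fixed, $\Phi(\theta_i,\theta_{-i})=J_i(\theta_i,\theta_{-i})+c(\theta_{-i})$ with $c$ independent of $\theta_i$, so $\argmax_{\theta_i}\Phi(\cdot,\theta_{-i})=\argmax_{\theta_i}J_i(\cdot,\theta_{-i})$; the latter is the value-maximization problem of a standard finite single-agent MDP for agent $i$ (fold $\theta_{-i}$ into the kernel $\tilde P(s'|s,a_i)=\sum_{a_{-i}}\pi_{-i}(a_{-i}|s)P(s'|s,(a_i,a_{-i}))$ and reward $\tilde r_i(s,a_i)=\sum_{a_{-i}}\pi_{-i}(a_{-i}|s)r_i(s,(a_i,a_{-i}))$), which is known to admit a deterministic optimal policy. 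Starting from any $\theta^{(0)}\in\argmax\Phi$, I would process agents $i=1,\dots,N$ in turn: given a global maximizer $\theta^{(i-1)}$ that is deterministic in coordinates $1,\dots,i-1$, replace coordinate $i$ by a deterministic optimal policy $\hat\theta_i$ of the single-agent MDP induced by $\theta_{-i}^{(i-1)}$; this cannot decrease $\Phi$, and since $\theta^{(i-1)}$ is already a global maximum, $\theta^{(i)}:=(\hat\theta_i,\theta_{-i}^{(i-1)})$ remains a global maximizer, now deterministic in coordinates $1,\dots,i$. After $N$ steps, $\theta^{(N)}$ is a deterministic global maximizer of $\Phi$, hence a Nash equilibrium by step (ii), and being deterministic it is a pure NE. I expect step (iii) to be the main obstacle: one must verify that the recursion stays consistent (updating coordinate $i$ changes the MDP faced by agents $i+1,\dots,N$ but leaves the already-fixed coordinates deterministic and the profile still a global maximizer) and that "not decreasing a global maximum" forces equality at each stage; steps (i) and (ii) are routine compactness/continuity and a one-line integration of the potential identity.
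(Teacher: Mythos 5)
The paper does not prove this proposition; it is quoted verbatim from the cited reference. Your three-step argument is correct and is essentially the standard proof from that literature: (i) Weierstrass on the compact product of simplices with the continuous (indeed polynomial-series) map $\theta\mapsto\Phi(\theta)$; (ii) the integrated potential identity \eqref{eq:rewrittenMPGDefinition} turning $\argmax\Phi$ into a simultaneous best response for every agent; (iii) the coordinate-wise purification, where fixing $\theta_{-i}$ reduces agent $i$'s problem to a finite discounted single-agent MDP that admits a deterministic optimal stationary policy, and replacing $\theta_i$ by such a policy cannot decrease $\Phi$, hence keeps the profile at the global maximum while making one more coordinate deterministic. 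The one point worth stating explicitly in step (iii) is that the deterministic optimal policy of the induced MDP is optimal simultaneously from every initial state, so it maximizes $\mathbb{E}_{s_0\sim\rho}V_i^{\theta}(s_0)=J_i(\cdot,\theta_{-i})$ regardless of $\rho$; with that remark the induction closes and the proof is complete.
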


By combining Eq. \eqref{eq:valueFunction}, \eqref{eq:objectiveFunction} and \eqref{eq:totalPotentialFunction}, we can rewrite \eqref{eq:markovPotentialGame} as:
\begin{equation}\label{eq:rewrittenMPGDefinition}
    J_i(\theta_i',\theta_{-i})-J_i(\theta_i,\theta_{-i})=\Phi(\theta_i',\theta_{-i})-\Phi(\theta_i,\theta_{-i}).
\end{equation}

Further,
\begin{equation}
    \nabla_{\theta_i}J_i(\theta)=\nabla_{\theta_i}\Phi(\theta),
\end{equation}
by which we can observe that instead of gradient play, we can run the following projected gradient ascent with respect to the total potential function $\Phi$:
\begin{equation}\label{eq:gradientPlayTotal}
    \theta^{(t+1)}=\text{Proj}_{\mathcal{X}}(\theta^{(t)}+\eta\nabla_{\theta}\Phi(\theta^{(t)})),\enspace\eta>0,
\end{equation}

Theorem \ref{thm:globalConvergence} ensures the convergence to an NE under gradient play in an MPG.
\begin{thm}\label{thm:globalConvergence}
    (\cite[Theorem 4.2]{globalConvergence}) Given an MPG, for any initial state, the projected gradient ascent in Eq. \eqref{eq:gradientPlayTotal} converges to an NE as $t\rightarrow\infty$.
\end{thm}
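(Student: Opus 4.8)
The plan is to reduce the statement to a convergence result for the single scalar objective $\Phi$ on the compact convex polytope $\mathcal{X}=\mathcal{X}_1\times\cdots\times\mathcal{X}_N$ and then to invoke the equivalence already established. Note that \eqref{eq:gradientPlayTotal} is exactly projected gradient ascent for maximizing $\Phi$ over $\mathcal{X}$; since it makes no reference to a particular initial state, it suffices to prove that these iterates converge to an NE (Assumption \ref{as:positiveProbability}, under which Theorem \ref{thm:equivalence} holds, supplies the reachability needed for every $\theta$). Suppose we can show the iterates of \eqref{eq:gradientPlayTotal} converge to some $\theta^{*}$ satisfying the first-order optimality condition $(\theta'-\theta^{*})^{\top}\nabla_{\theta}\Phi(\theta^{*})\le 0$ for all $\theta'\in\mathcal{X}$. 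Choosing the test point $\theta'=(\theta_i',\theta_{-i}^{*})\in\mathcal{X}$ collapses this to $(\theta_i'-\theta_i^{*})^{\top}\nabla_{\theta_i}\Phi(\theta^{*})\le 0$ for every $\theta_i'\in\mathcal{X}_i$ and every $i$, and since $\nabla_{\theta_i}\Phi(\theta^{*})=\nabla_{\theta_i}J_i(\theta^{*})$ this is precisely the condition in Definition \ref{df:firstOrderStationaryPolicy}; Theorem \ref{thm:equivalence} then gives that $\theta^{*}$ is an NE, which is the claim.

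First I would establish that $\Phi$ is continuously differentiable with an $L$-Lipschitz gradient on a neighborhood of $\mathcal{X}$. Writing $P_\theta$ and $r_\theta$ for the state-transition matrix and the expected per-step potential induced by $\pi_\theta$, one has $\Phi(\theta)=\rho^{\top}(I-\gamma P_\theta)^{-1}r_\theta$, a rational function of $\theta$ whose denominator $\det(I-\gamma P_\theta)$ is bounded away from zero uniformly on $\mathcal{X}$ because $\gamma\in[0,1)$ and $\|P_\theta\|_\infty=1$. Hence $\Phi$ and its first and second derivatives are bounded on the compact set $\mathcal{X}$, yielding a finite $L$ with $\|\nabla\Phi(\theta)-\nabla\Phi(\tilde\theta)\|\le L\|\theta-\tilde\theta\|$. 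Fixing any step size $\eta\in(0,1/L)$, the standard sufficient-increase estimate for a projected gradient step gives $\Phi(\theta^{(t+1)})\ge\Phi(\theta^{(t)})+c\,\|\theta^{(t+1)}-\theta^{(t)}\|^{2}$ with $c=\tfrac1\eta-\tfrac L2>0$. Since $\Phi$ is continuous on the compact $\mathcal{X}$ and thus bounded above, $\{\Phi(\theta^{(t)})\}$ is nondecreasing and convergent, so telescoping yields $\sum_{t}\|\theta^{(t+1)}-\theta^{(t)}\|^{2}<\infty$; equivalently the norm of the projected-gradient mapping at $\theta^{(t)}$ tends to zero, and by continuity every limit point of the bounded sequence $\{\theta^{(t)}\}$ satisfies the first-order condition above.

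To upgrade ``every subsequential limit is first-order stationary'' to ``the whole sequence converges to a single NE'', I would use that $\Phi$ is not merely smooth but real-analytic (indeed rational) on $\mathcal{X}$ and hence satisfies a Kurdyka--\L{}ojasiewicz inequality there; combined with the sufficient-increase property, this is the classical ingredient that forces $\{\theta^{(t)}\}$ to converge to a single point $\theta^{*}$, which is then first-order stationary and therefore, by Theorem \ref{thm:equivalence}, an NE.

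The step I expect to be the main obstacle is the quantitative smoothness estimate: producing a $\theta$-uniform Lipschitz constant for $\nabla\Phi$ over all of $\mathcal{X}$ requires bounding $(I-\gamma P_\theta)^{-1}$ together with its first and second $\theta$-derivatives --- conveniently through the Neumann series $\sum_{k\ge0}\gamma^{k}P_\theta^{k}$ and the product rule --- uniformly on the polytope, including its boundary where some policy coordinates vanish. The analyticity/\L{}ojasiewicz argument delivering convergence of the actual iterates, rather than merely of their accumulation set, is the other delicate point; if one is content with convergence to the set of NEs, that part can be dropped.
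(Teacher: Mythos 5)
The paper does not actually prove Theorem~\ref{thm:globalConvergence}; it imports the result verbatim from the cited reference, so there is no in-paper argument to measure you against. Your proposal is, on its own terms, a correct and essentially complete route to the statement: reduce the multi-agent update \eqref{eq:gradientPlayTotal} to single-objective projected gradient ascent on $\Phi$ over the product polytope $\mathcal{X}$; establish $L$-smoothness of $\Phi$ via the resolvent $(I-\gamma P_\theta)^{-1}=\sum_{k\ge0}\gamma^kP_\theta^k$ (under direct parameterization $\pi_\theta(a|s)$ is multilinear in $\theta$, the denominator $\det(I-\gamma P_\theta)$ has modulus at least $(1-\gamma)^{|\mathcal{S}|}$, and everything extends analytically to a neighborhood of $\mathcal{X}$, so the boundary causes no trouble); get sufficient increase and summability of $\|\theta^{(t+1)}-\theta^{(t)}\|^2$; conclude that limit points are fixed points of the projected-gradient map, hence first-order stationary for $\Phi$; restrict test directions to the factors $\mathcal{X}_i$ of the product set and use $\nabla_{\theta_i}\Phi=\nabla_{\theta_i}J_i$ plus Theorem~\ref{thm:equivalence} (which needs Assumption~\ref{as:positiveProbability}, correctly invoked) to identify stationary points with NEs. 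Two points deserve emphasis. First, your argument silently strengthens the theorem's hypotheses: convergence requires $\eta$ small relative to the smoothness constant $L$ of $\Phi$ (your $\eta<1/L$, or more generally $\eta<2/L$), a restriction absent from the statement as written but unavoidable; any honest write-up should surface it. Second, you are right that the Kurdyka--\L{}ojasiewicz step is not optional if one insists on the literal claim ``converges to \emph{an} NE'': without it the sufficient-increase argument only yields that the set of accumulation points is contained in the set of NEs, and since $\Phi$ is semialgebraic and $\mathcal{X}$ polyhedral, the standard Attouch--Bolte--Svaiter machinery (sufficient increase plus the relative-error bound $\|\nabla\Phi(\theta^{(t)})+\tfrac{1}{\eta}(\theta^{(t)}-\theta^{(t+1)})\|\le(\tfrac1\eta+L)\|\theta^{(t+1)}-\theta^{(t)}\|$ supplied by the projection's normal-cone condition) indeed closes that gap. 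In short: correct, with the caveats you yourself flagged being exactly the right ones.
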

\subsection{Construction of MPG}\label{sec:sufficientConditions}

In this subsection, we develop sufficient conditions to construct the MPG. Note that Eq. \eqref{eq:markovPotentialGame} suggests that the difference in the discounted sum of future rewards caused by agent $i$'s deviated policy is the same as the difference in the discounted sum of future values of the potential function.

Theorem \ref{thm:selfPotentialFunction} indicates that when agent $i$'s transition probability and reward function are respectively only determined by its own policy, an MG is an MPG.
\begin{thm}\label{thm:selfPotentialFunction}
    Consider an MG where each agent has independent initial state distribution, and agent $i$'s reward function satisfies the following form,
    \begin{equation}\label{eq:selfReward}
        r_i(s_t,a_t)=r_i^{self}(s_{i,t},a_{i,t}),
    \end{equation}
    where $r_i^{self}(s_{i,t},a_{i,t})$ is solely dependent on agent $i$'s policy $\theta_i$. Suppose $P(s_i'|s_i,a_i,a_{-i}')=P(s_i'|s_i,a_i,a_{-i})$, $\forall a_{-i},a_{-i}'\in\mathcal{A}_{-i},\forall a_i\in\mathcal{A}_i,\forall s_i\in\mathcal{S}_i$ and $\forall i\in\mathcal{N}$. Then the formulated game is an MPG with a potential function \begin{equation}\label{eq:selfPotenitalFunction}
        \phi^{self}(s_t,a_t)=\sum_{i\in\mathcal{N}}r_i^{self}(s_{i,t},a_{i,t}).
    \end{equation}
\end{thm}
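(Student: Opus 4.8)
The plan is to verify Definition~\ref{df:markovPotentialGame} directly for the candidate potential $\phi^{self}$ of Eq.~\eqref{eq:selfPotenitalFunction}. The first step is to note, by linearity of expectation applied to Eq.~\eqref{eq:valueFunction}--\eqref{eq:totalPotentialFunction} together with $r_j^{self}(s_{j,t},a_{j,t})=r_j(s_t,a_t)$ from Eq.~\eqref{eq:selfReward}, that the discounted $\phi^{self}$-return from any fixed initial state $s$ equals $\sum_{j\in\mathcal N}V_j^{\theta}(s)$, so in particular $\Phi(\theta)=\sum_{j\in\mathcal N}J_j(\theta)$. Proving Eq.~\eqref{eq:markovPotentialGame} then reduces to showing that, for a unilateral deviation $\theta_i\to\theta_i'$ by any agent $i$, the difference $\sum_{j\in\mathcal N}\big(V_j^{(\theta_i',\theta_{-i})}(s)-V_j^{(\theta_i,\theta_{-i})}(s)\big)$ collapses to its $j=i$ term.

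The main step is the following decoupling claim: under the stated hypotheses, $V_j^{\theta}(s)$ depends on $\theta$ only through $\theta_j$. To establish it I would show that the marginal process $\{(s_{j,t},a_{j,t})\}_{t\ge 0}$ is an autonomous Markov chain whose law is determined by $s_{j,0}$ and $\theta_j$ alone: the reward $r_j^{self}$ reads only the pair $(s_{j,t},a_{j,t})$; the assumption $P(s_j'|s_j,a_j,a_{-j})=P(s_j'|s_j,a_j,a_{-j}')$ makes the conditional law of $s_{j,t+1}$ given $(s_{j,t},a_{j,t})$ free of the other agents; and agent $j$'s action is generated from $\theta_j$. Taking the discounted expectation of $r_j^{self}$ over this chain then yields a function of $(\theta_j,s_j)$ only, and the independent initial-state distributions let this pass through the average over $s_0\sim\rho$, giving $J_j(\theta)=J_j(\theta_j)$.

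Granting the decoupling, the conclusion is immediate by telescoping: for any agent $i$, any $(\theta_i',\theta_{-i}),(\theta_i,\theta_{-i})$, and any $s$,
\begin{equation*}
\sum_{j\in\mathcal N}\big(V_j^{(\theta_i',\theta_{-i})}(s)-V_j^{(\theta_i,\theta_{-i})}(s)\big)=V_i^{(\theta_i',\theta_{-i})}(s)-V_i^{(\theta_i,\theta_{-i})}(s),
\end{equation*}
because every term with $j\neq i$ is unchanged by the deviation. By the first step the left-hand side is exactly the difference of discounted $\phi^{self}$-returns and the right-hand side is the difference of agent $i$'s discounted reward returns, which is precisely Eq.~\eqref{eq:markovPotentialGame}; averaging over $s_0\sim\rho$ recovers the equivalent form Eq.~\eqref{eq:rewrittenMPGDefinition}.

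I expect the decoupling claim to be the only delicate point. The transition assumption removes the dependence of the \emph{state} evolution on $a_{-j}$, but one must also ensure that the \emph{action} law of agent $j$ does not smuggle a dependence on $\theta_{-j}$ back in through the state it conditions on --- this is the role of the hypothesis that each $r_j^{self}$-return is solely a function of $\theta_j$, which, with the factorized initial distribution, is tantamount to the MG splitting into $N$ independent single-agent MDPs whose returns sum to $\Phi$. Once that structure is in place, the remaining steps are routine bookkeeping.
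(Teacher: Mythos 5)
Your proposal takes essentially the same route as the paper: write $\Phi(\theta)=\sum_{j\in\mathcal N}J_j(\theta)$ by linearity of expectation and cancel the $j\neq i$ terms on the grounds that agent $j$'s trajectory is unaffected by agent $i$'s unilateral deviation, which is exactly the cancellation performed in \eqref{preq:rightSelfPotentialFunction}. The one delicate point you flag is real and is glossed over in the paper as well: since $\pi_j$ conditions on the \emph{global} state $s_t$ rather than on $s_{j,t}$, the marginal chain $(s_{j,t},a_{j,t})$ is not automatically autonomous even under the transition assumption ($a_{j,t}\sim\pi_j(\cdot|s_t)$ can inherit a dependence on $s_{i,t}$, and hence on $\theta_i$, which then feeds back into $s_{j,t+1}$), so both your argument and the paper's ultimately rest on reading the clause ``$r_i^{self}$ is solely dependent on $\theta_i$'' as assuming the decoupling, or equivalently on restricting attention to local-state-feedback policies $\pi_j(a_j|s_j)$.
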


\begin{proof}
    With \eqref{eq:selfReward}, the total rewards of agent $i$ is:
    \begin{equation}\label{preq:selfTotalReward}
    \begin{split}
        J_i(\theta)&=\mathbb{E}_{s_0\sim\rho}\Biggl[\sum_{t=0}^{\infty}\gamma^{t}r_i^{self}(s_{i,t},a_{i,t})\Bigg|\pi_{\theta},s_0\Biggr].
    \end{split}
    \end{equation}
    
Therefore,
    \begin{equation}
    \begin{split}
       &J_i(\theta_i',\theta_{-i})-J_i(\theta_i,\theta_{-i})\\
        =&\mathbb{E}_{s_0\sim\rho}\Biggl\{\sum_{t=0}^{\infty}\gamma^{t}\Biggl[r_i^{self}(s_{i,t}',a_{i,t}')\\
        &-r_i^{self}(s_{i,t},a_{i,t})\Bigg|\pi_{(\theta_i',\theta_{-i})},\pi_{(\theta_i,\theta_{-i})},s_0\Biggr]\Biggr\}.
    \end{split}
    \end{equation}

    Meanwhile, the total potential function is:
    \begin{equation}\label{preq:selfTotalPotentialFunction}
    \Phi(\theta)=\mathbb{E}_{s_o\sim\rho}\Biggl[\sum_{t=0}^{\infty}\gamma^{t}\sum_{i\in\mathcal{N}}r_i^{self}(s_{i,t},a_{i,t})\Bigg|\pi_{\theta},s_0\Biggr].
    \end{equation}
    
    As is defined in \eqref{preq:selfTotalPotentialFunction}, the deviation in the policy of agent $i$ yields

    \begin{equation}\label{preq:rightSelfPotentialFunction}
    \begin{split}
    &\Phi(\theta_i',\theta_{-i})-\Phi(\theta_i,\theta_{-i})\\
    &=\, \mathbb{E}_{s_0\sim\rho}\Biggl\{
    \sum_{t=0}^{\infty}\gamma^{t}\Biggl[
    r_i^{self}(s_{i,t}',a_{i,t}') \\
    &+\sum_{\substack{j\in\mathcal{N}\\ j\neq i}}r_j^{self}(s_{j,t},a_{j,t})
    - r_i^{self}(s_{i,t},a_{i,t}) \\
    &-\sum_{\substack{j\in\mathcal{N}\\ j\neq i}}r_j^{self}(s_{j,t},a_{j,t})
    \Bigg|\pi_{(\theta_i',\theta_{-i})},\pi_{(\theta_i,\theta_{-i})},s_0\Biggr]
    \Biggr\} \\
    &=\, \mathbb{E}_{s_0\sim\rho}\Biggl\{
    \sum_{t=0}^{\infty}\gamma^{t}\Biggl[
    r_i^{self}(s_{i,t}',a_{i,t}')\\
    &-r_i^{self}(s_{i,t},a_{i,t})
    \Bigg|\pi_{(\theta_i',\theta_{-i})},\pi_{(\theta_i,\theta_{-i})},s_0\Biggr]
    \Biggr\}.
    \end{split}
    \end{equation}

    The term $\sum_{j\in\mathcal{N},j\neq i}r_j^{self}(s_{j,t},a_{j,t})$ can be separated out and canceled out  because  $P(s_j'|s_j,a_j,a_{-j}')=P(s_j'|s_j,a_j,a_{-j})$, i.e., as $a_{i,t}$ changes to $a_{i,t}'$, the trajectory of agent $j$ will not be affected.
    
   Therefore, Eq. \eqref{eq:rewrittenMPGDefinition} holds, and the proof is complete. 
\end{proof}


Theorem \ref{thm:jointPotentialFunction} considers scenarios where agent $i$'s reward depends on both its own policy and other agents' policies.

\begin{thm}\label{thm:jointPotentialFunction}
    Consider an MG where each agent has independent initial state distribution, and agent $i$'s reward function satisfies the following form,
    \begin{equation}\label{eq:jointReward}
        r_i(s_t,a_t)=\sum_{j\in\mathcal{N},j\neq i}r_{ij}(s_{i,t},s_{j,t},a_{i,t},a_{j,t}),
    \end{equation}
    where $r_{ij}(s_{i,t},s_{j,t},a_{i,t},a_{j,t})=r_{ji}(s_{j,t},s_{i,t},a_{j,t},a_{i,t})$, $\forall i,$ $j\in \mathcal{N}, i\neq j$. Suppose $P(s_i'|s_i,a_i,a_{-i}')=P(s_i'|s_i,a_i,a_{-i})$, $\forall a_{-i},a_{-i}'\in\mathcal{A}_{-i},\forall a_i\in\mathcal{A}_i,\forall s_i\in\mathcal{S}_i$ and $\forall i\in\mathcal{N}$. Then the formulated game is an MPG with a potential function  \begin{equation}\label{eq:jointPotenitalFunction}
        \phi^{joint}(s_t,a_t)=\sum_{i\in\mathcal{N}}\sum_{j\in\mathcal{N},j<i}r_{ij}(s_{i,t},s_{j,t},a_{i,t},a_{j,t}).
    \end{equation}
\end{thm}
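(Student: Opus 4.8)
The plan is to mirror the structure of the proof of Theorem~\ref{thm:selfPotentialFunction}, since the potential function $\phi^{joint}$ is again a symmetric aggregation of the pairwise reward terms. First I would write out $J_i(\theta_i',\theta_{-i}) - J_i(\theta_i,\theta_{-i})$ using \eqref{eq:jointReward}, so that this difference is the expected discounted sum of $\sum_{j\neq i}\big(r_{ij}(s_{i,t}',s_{j,t}',a_{i,t}',a_{j,t}') - r_{ij}(s_{i,t},s_{j,t},a_{i,t},a_{j,t})\big)$. Then I would expand $\Phi(\theta_i',\theta_{-i}) - \Phi(\theta_i,\theta_{-i})$ from \eqref{eq:jointPotenitalFunction}. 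The key observation is that in $\phi^{joint}=\sum_i\sum_{j<i} r_{ij}$, the terms involving index $i$ are exactly those pairs $\{i,k\}$ for all $k\neq i$: when $k<i$ the pair appears as $r_{ik}$, and when $k>i$ it appears as $r_{ki}$, which by the symmetry hypothesis $r_{ki}(s_k,s_i,a_k,a_i)=r_{ik}(s_i,s_k,a_i,a_k)$ equals $r_{ik}$. So the total potential splits as $\sum_{k\neq i} r_{ik}(s_{i,t},s_{k,t},a_{i,t},a_{k,t}) + \sum_{j,k\neq i,\ j<k} r_{jk}(\cdots)$, i.e., a part that depends on agent $i$'s trajectory and a remainder that only involves agents other than $i$.

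The second step is the cancellation argument, identical in spirit to the one in Theorem~\ref{thm:selfPotentialFunction}. Under the transition-decoupling assumption $P(s_i'|s_i,a_i,a_{-i}')=P(s_i'|s_i,a_i,a_{-i})$, each agent $j\neq i$'s state trajectory is unaffected by whether agent $i$ plays $\theta_i$ or $\theta_i'$, and $\theta_{-i}$ is held fixed; combined with the independent initial state distributions, the joint law of $(s_{j,t},s_{k,t},a_{j,t},a_{k,t})$ for $j,k\neq i$ is the same under $\pi_{(\theta_i',\theta_{-i})}$ and $\pi_{(\theta_i,\theta_{-i})}$. Hence the remainder term $\sum_{j,k\neq i,\ j<k} r_{jk}$ contributes equally to $\Phi(\theta_i',\theta_{-i})$ and $\Phi(\theta_i,\theta_{-i})$ and cancels. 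What survives is $\mathbb{E}_{s_0\sim\rho}\sum_{t}\gamma^t\sum_{k\neq i}\big(r_{ik}(s_{i,t}',s_{k,t}',a_{i,t}',a_{k,t}') - r_{ik}(s_{i,t},s_{k,t},a_{i,t},a_{k,t})\big)$, which matches exactly the expression for $J_i(\theta_i',\theta_{-i}) - J_i(\theta_i,\theta_{-i})$ derived in the first step. Therefore \eqref{eq:rewrittenMPGDefinition} holds and $\mathcal{M}$ is an MPG.

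I expect the main obstacle to be stating the cancellation rigorously rather than any computation. The subtle point is that $r_{ik}$ depends on agent $k$'s state \emph{and} action, and I must argue that not only the state marginals but the full joint distribution over $(s_{i,t},a_{i,t}; s_{k,t},a_{k,t})$ behaves correctly — specifically that the portion entering the remainder (pairs $j,k\neq i$) is genuinely independent of agent $i$'s realized trajectory. Here the decentralized parameterization $\pi_k(a_{k,t}|s_t)$ in \eqref{eq:gloabalPolicy} nominally conditions on the \emph{global} state $s_t$, which includes $s_{i,t}$, so one might worry agent $k$'s action does depend on agent $i$'s state. The cleanest way around this, consistent with how Theorem~\ref{thm:selfPotentialFunction} was handled, is to note that in the decoupled setting the relevant trajectories of agents in $\mathcal{N}\setminus\{i\}$ evolve autonomously — one works with the policies restricted to their own state components — so the remainder expectation is literally the same functional of $\theta_{-i}$ in both terms; I would state this as the key lemma-style remark and then the algebra closes immediately.
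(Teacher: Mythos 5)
Your proof follows essentially the same route as the paper's: decompose $\phi^{joint}$ into the pair terms involving agent $i$ (which by the symmetry $r_{ki}=r_{ik}$ sum to exactly $r_i$) and a remainder over pairs not involving $i$, then cancel the remainder using the transition-decoupling condition so that \eqref{eq:rewrittenMPGDefinition} holds. The subtlety you flag --- that $\pi_j(a_j|s)$ conditions on the \emph{global} state, so agent $j$'s trajectory is not automatically unaffected by agent $i$'s deviation --- is genuine and is glossed over in the paper's own proof as well; your proposed resolution (working with policies restricted to each agent's own state component) is the reading under which both arguments close.
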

\begin{proof}
    With \eqref{eq:jointReward}, the total rewards of agent $i$ is:
    \begin{equation}\label{preq:jointTotalReward}
    \begin{split}
          &J_i(\theta)\\
          &=\mathbb{E}_{s_0\sim\rho}\Biggl[\sum_{t=0}^{\infty}\gamma^{t}\sum_{j\in\mathcal{N},j\neq i}r_{ij}(s_{i,t},s_{j,t},a_{i,t},a_{j,t})\Bigg|\pi_{\theta},s_0\Biggr].
    \end{split}
    \end{equation}
    
Therefore,
    \begin{equation}
    \begin{split}
        &J_i(\theta_i',\theta_{-i}) - J_i(\theta_i,\theta_{-i})\\
        &= \mathbb{E}_{s_0\sim\rho}\Biggl\{ \sum_{t=0}^{\infty}\gamma^{t} \Biggl[
        \sum_{j\in\mathcal{N},\, j\neq i} \Bigl(
        r_{ij}(s_{i,t}',s_{j,t},a_{i,t}',a_{j,t})\\
        &- r_{ij}(s_{i,t},s_{j,t},a_{i,t},a_{j,t})
        \Bigr)\Bigg|\pi_{(\theta_i',\theta_{-i})},\pi_{(\theta_i,\theta_{-i})},s_0\Biggr]\Biggr\}.
    \end{split}
    \end{equation}

    Meanwhile, the total potential function is:
    \begin{equation}\label{preq:jointPotentialFunction}
    \begin{split}
    &\Phi(\theta)=\mathbb{E}_{s_o\sim\rho}\\
    &\Biggl[\sum_{t=0}^{\infty}\gamma^{t}\sum_{i\in\mathcal{N}}\sum_{j\in\mathcal{N},j<i}r_{ij}(s_{i,t},s_{j,t},a_{i,t},a_{j,t})\Bigg|\pi_{\theta},s_0\Biggr].
    \end{split}
    \end{equation}
    
    As is defined in \eqref{eq:jointPotenitalFunction}, the deviation in the policy of agent $i$ yields:

    \begin{equation}\label{preq:rightJointPotentialFunction}
    \begin{split}
        &\Phi(\theta_i',\theta_{-i})-\Phi(\theta_i,\theta_{-i})\\
        &= \mathbb{E}_{s_0\sim\rho}\Biggl\{
        \sum_{t=0}^{\infty}\gamma^{t}\, \sum_{j\in\mathcal{N},\, j\neq i}\Biggl[
        r_{ij}(s_{i,t}',s_{j,t},a_{i,t}',a_{j,t})\\
        &-\, r_{ij}(s_{i,t},s_{j,t},a_{i,t},a_{j,t})
        \Bigg|\pi_{(\theta_i',\theta_{-i})},\pi_{(\theta_i,\theta_{-i})},s_0\Biggr]
        \Biggr\}.
    \end{split}
    \end{equation}

    Recall the symmetry of the joint reward function, i.e., $r_{ij}(s_{i,t},s_{j,t},a_{i,t},a_{j,t})=r_{ji}(s_{j,t},s_{i,t},a_{j,t},a_{i,t})$ and the condition $P(s_j'|s_j,a_j,a_{-j}')=P(s_j'|s_j,a_j,a_{-j})$, $\forall i, j\in \mathcal{N}, i\neq j$ , i.e., as $a_{i,t}$ changes to $a_{i,t}'$, the trajectory of agent $j$ will not be affected. As such, the terms that do not concern agent $i$ will be canceled, and thus Eq. \eqref{preq:rightJointPotentialFunction} holds.

\end{proof}

Theorem \ref{thm:selfAndJointPotentialFunction} combines the conditions in Theorems \ref{thm:selfPotentialFunction} and \ref{thm:jointPotentialFunction}.

\begin{thm}\label{thm:selfAndJointPotentialFunction}
    Consider an MG where each agent has independent initial state distribution, and agent $i$'s reward function satisfies the following form,
    \begin{equation}\label{eq:mixReward}
    \begin{split}
        &r_i(s_t,a_t)=\alpha r_i^{self}(s_{i,t},a_{i,t})\\
        &+\beta\sum_{j\in\mathcal{N},j\neq i}r_{ij}(s_{i,t},a_{i,t},s_{j,t},a_{j,t}),
    \end{split}
    \end{equation}
    where $r_i^{self}(s_{i,t},a_{i,t})$ and $\sum_{j\in\mathcal{N},j\neq i}r_{ij}(s_{i,t},a_{i,t},s_{j,t},a_{j,t})$ follow directly from \eqref{eq:selfReward} and \eqref{eq:jointReward}, respectively, and $\alpha\in\mathbb{R}$ and $\beta\in\mathbb{R}$. Suppose $P(s_i'|s_i,a_i,a_{-i}')=P(s_i'|s_i,a_i,a_{-i})$, $\forall a_{-i},a_{-i}'\in\mathcal{A}_{-i},\forall a_i\in\mathcal{A}_i,\forall s_i\in\mathcal{S}_i$ and $\forall i\in\mathcal{N}$. Then the formulated game is an MPG with a potential function
    \begin{equation}\label{eq:selfAndJointPotentialFunction}
    \begin{split}
        &\phi(s_t,a_t)=\alpha\phi^{self}(s_t,a_t)+\beta\phi^{joint}(s_t,a_t)\\
        &=\alpha \sum_{i\in\mathcal{N}}r_i^{self}(s_{i,t},a_{i,t})\\
        &+\beta\sum_{i\in\mathcal{N}}\sum_{j\in\mathcal{N},j< i}r_{ij}(s_{i,t},s_{j,t},a_{i,t},a_{j,t})).
    \end{split}
    \end{equation}
\end{thm}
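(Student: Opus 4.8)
The plan is to exploit the linearity of the value function (and hence of the total reward $J_i$ and of the total potential $\Phi$) in the per-step reward, so that the mixed case reduces to the two already-established cases. First I would introduce the auxiliary quantities $J_i^{self}(\theta)$ and $J_i^{joint}(\theta)$, defined exactly as $J_i$ in \eqref{eq:objectiveFunction} but with the per-step reward replaced by $r_i^{self}(s_{i,t},a_{i,t})$ and by $\sum_{j\in\mathcal{N},j\neq i} r_{ij}(s_{i,t},a_{i,t},s_{j,t},a_{j,t})$, respectively; likewise $\Phi^{self}$ and $\Phi^{joint}$ with per-step potentials $\phi^{self}$ and $\phi^{joint}$ from \eqref{eq:selfPotenitalFunction} and \eqref{eq:jointPotenitalFunction}. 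Since the transition model $P$, the initial distribution $\rho$, and the policy $\pi_\theta$ are unchanged across these definitions, and since expectation and the (absolutely convergent, as rewards are bounded on the finite $\mathcal{S},\mathcal{A}$ and $\gamma<1$) discounted sum are linear, we obtain $J_i(\theta)=\alpha J_i^{self}(\theta)+\beta J_i^{joint}(\theta)$ and $\Phi(\theta)=\alpha\Phi^{self}(\theta)+\beta\Phi^{joint}(\theta)$ for every $\theta\in\mathcal{X}$.

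Next I would invoke the two preceding theorems directly. The reward family $\{r_i^{self}\}$ together with the stated transition condition satisfies the hypotheses of Theorem \ref{thm:selfPotentialFunction}, which (via \eqref{eq:rewrittenMPGDefinition}) gives $J_i^{self}(\theta_i',\theta_{-i})-J_i^{self}(\theta_i,\theta_{-i})=\Phi^{self}(\theta_i',\theta_{-i})-\Phi^{self}(\theta_i,\theta_{-i})$ for every agent $i$ and every pair of deviations $(\theta_i',\theta_{-i}),(\theta_i,\theta_{-i})$. Similarly, the symmetry assumption $r_{ij}=r_{ji}$ and the same transition condition place us in the setting of Theorem \ref{thm:jointPotentialFunction}, yielding $J_i^{joint}(\theta_i',\theta_{-i})-J_i^{joint}(\theta_i,\theta_{-i})=\Phi^{joint}(\theta_i',\theta_{-i})-\Phi^{joint}(\theta_i,\theta_{-i})$.

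Finally I would combine these: taking $\alpha$ times the first identity plus $\beta$ times the second, the left-hand side collapses to $J_i(\theta_i',\theta_{-i})-J_i(\theta_i,\theta_{-i})$ and the right-hand side to $\Phi(\theta_i',\theta_{-i})-\Phi(\theta_i,\theta_{-i})$, which is precisely \eqref{eq:rewrittenMPGDefinition}; hence the game is an MPG with potential $\phi=\alpha\phi^{self}+\beta\phi^{joint}$ as claimed. I do not anticipate a genuine obstacle: the only point that needs care is justifying the decomposition of $\Phi$ along the two reward types and then appealing to the earlier proofs — in particular, the fact that the terms of $\phi^{joint}$ not involving agent $i$ are unaffected by $i$'s deviation (where the transition condition is used, exactly as in the proof of Theorem \ref{thm:jointPotentialFunction}) and that the $j<i$ bookkeeping pairs each $r_{ij}$ exactly once. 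Both are inherited verbatim from Theorems \ref{thm:selfPotentialFunction} and \ref{thm:jointPotentialFunction}, so the mixed case follows immediately.
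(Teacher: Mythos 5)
Your proof is correct and takes essentially the same approach as the paper, whose entire argument is the one-line remark that the result follows by applying Theorems \ref{thm:selfPotentialFunction} and \ref{thm:jointPotentialFunction}; your write-up simply makes explicit the linearity decomposition $J_i=\alpha J_i^{self}+\beta J_i^{joint}$ and $\Phi=\alpha\Phi^{self}+\beta\Phi^{joint}$ that the paper leaves implicit.
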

\begin{proof}
    The proof can be completed by applying Theorem \ref{thm:selfPotentialFunction} and Theorem \ref{thm:jointPotentialFunction}.
\end{proof}

\section{APPLICATION TO AUTONOMOUS DRIVING}\label{Section4}
In this section, we evaluate the performance of the MPG-based MARL using autonomous driving applications with intersection-crossing scenarios.
\subsection{Simulation Setup}
\begin{figure}
    \centering
    \includegraphics[width=0.5\linewidth]{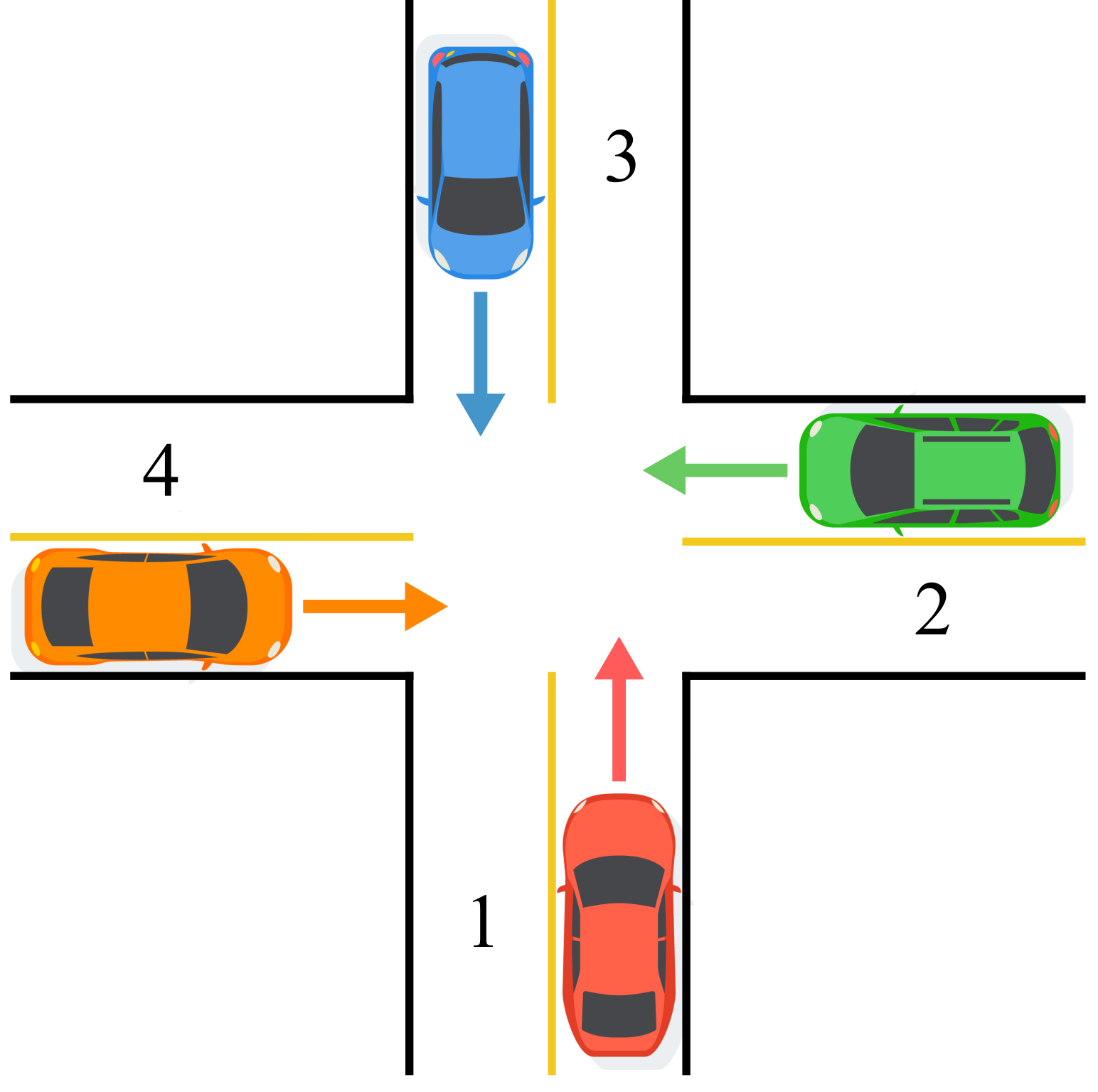}
    \caption{The four-vehicle intersection scenario.}
    \label{fig:intersectionCrossing}
\end{figure}
Consider a four-way intersection depicted in Figure \ref{fig:intersectionCrossing}, where vehicle ``2" is the ego vehicle, and the rest are surrounding vehicles. Each vehicle is set to go straight in its designated lane, hence only longitudinal actions are analyzed. The state for vehicle $i$ is $s_{i,t}=(p_i(t),v_i(t))$, where $i=1,\cdots,N$ is the index for each vehicle; $p_i$ and $v_i$ represents the longitudinal position and velocity along its lane. Based on the direction of motion, $p_i$ can be $x_i$ or $y_i$, where $x_i$ and $y_i$ are the position of the center of mass of vehicle $i$.

We consider the single-point mass dynamics for each agent: 
\begin{equation}\label{eq:intersectionDynamics}
\begin{split}
    &x_i(t+1)=x_i(t)+v_{i,x}(t)\Delta t,\\
    &v_{i,x}(t+1)=v_{i,x}(t)+a_{i,x}(t)\Delta t,\\
    &y_i(t+1)=y_i(t)+v_{i,y}(t)\Delta t,\\
    &v_{i,y}(t+1)=v_{i,y}(t)+a_{i,y}(t)\Delta t,
\end{split}
\end{equation}
where $v_{i,x}$ and $v_{i,y}$ are the velocities of the center of mass of vehicle $i$ along $x$ and $y$ axes, respectively. The action is the longitudinal acceleration of the vehicle. The sign of the acceleration indicates its direction relative to the defined coordinate axis, where a positive value represents that the acceleration is aligned with the positive direction of that axis. We select $\Delta t=0.5$ s here.
 
Each vehicle's action space is $\mathcal{A}_i=[-g,g]$, where $g=9.81$ $\text{m}/\text{s}^2$ denotes the gravitational constant. The action $a_{i,t}$ can take any real value within this range. Each vehicle is controlled by a deterministic policy defined as $\pi:\mathcal{S}\rightarrow\mathcal{A}$. We estimate the total rewards function over a fixed time horizon $T$,
\begin{equation}\label{simeq:totalPotentialFunction}
J_i(\theta)=\mathbb{E}_{s_0\sim\rho}\left[\sum_{t=0}^{T}\gamma^tr_i(s_t,a_t)\Bigg|\pi_{\theta},s_0\right],
\end{equation}
where we select $T$ to correspond to $20$ s.

We first consider a $4$-vehicle scenario, i.e.,  $N=4$. The initial state $s_0\in\mathcal{S}$ is sampled via stratified sampling to ensure representative coverage of the multidimensional space. Specifically, each element of the state vector (e.g., positions and velocities of individual vehicles) is discretized into evenly spaced intervals, from which one sample is randomly drawn. The initial state set is then constructed by taking the Cartesian product of these sampled values across all dimensions. Next, we derive the NE policies for each vehicle by solving an MPG at each visited state. The driving performance for each vehicle consists of two parts: desired velocity tracking and collision avoidance. Specifically, 
\begin{equation}
    J_i(\theta) = \omega_{i,1}J_i^{self}(\theta)+\omega_{i,2}J_i^{joint}(\theta),
\end{equation}
where $\omega_{i,1}$ and $\omega_{i,2}$ are constant coefficients to balance the two parts. The first term, i.e., $J_i^{self}(\theta)$, is to motivate the vehicle to maintain its desired velocity, and it takes the form \eqref{preq:selfTotalReward}, where $r_i^{self}(s_{i,t},a_{i,t})$ is given by
\begin{equation}
    r_i^{self}(s_{i,t},a_{i,t})=-(v_{i,t}-v_{i,d})^2.
\end{equation}
The desired velocities of vehicles $i=1,2,3,4$ are chosen as $v_{1,d}=5$ m/s, $v_{2,d}=-5$ m/s, $v_{3,d}=-5$ m/s, $v_{4,d}=5$ m/s. The second term $J_i^{joint}(\theta)$ is formulated to prevent collisions between vehicles and takes the form \eqref{preq:jointTotalReward}, where $r_{ij}(s_{i,t},s_{j,t},a_{i,t},a_{j,t})$ is given by
\begin{equation}\label{simeq:jointReward}
\begin{split}
    &r_{ij}(s_{i,t},s_{j,t},a_{i,t},a_{j,t})\\
          &=-\frac{1}{\sqrt{(x_i(t)-x_j(t))^2+(y_i(t)-y_j(t))^2}+\epsilon}.
\end{split}
\end{equation}
The parameter $\epsilon$ is introduced to avoid the denominator being zero and is set to be $1\times10^{-5}$.

According to Theorem \ref{thm:selfAndJointPotentialFunction}, with the reward function design described above, the $N$-player MG qualifies as an MPG.

\subsection{Training Setup and Convergence Analysis}
\begin{figure}
    \centering
    \includegraphics[width=0.95\linewidth]{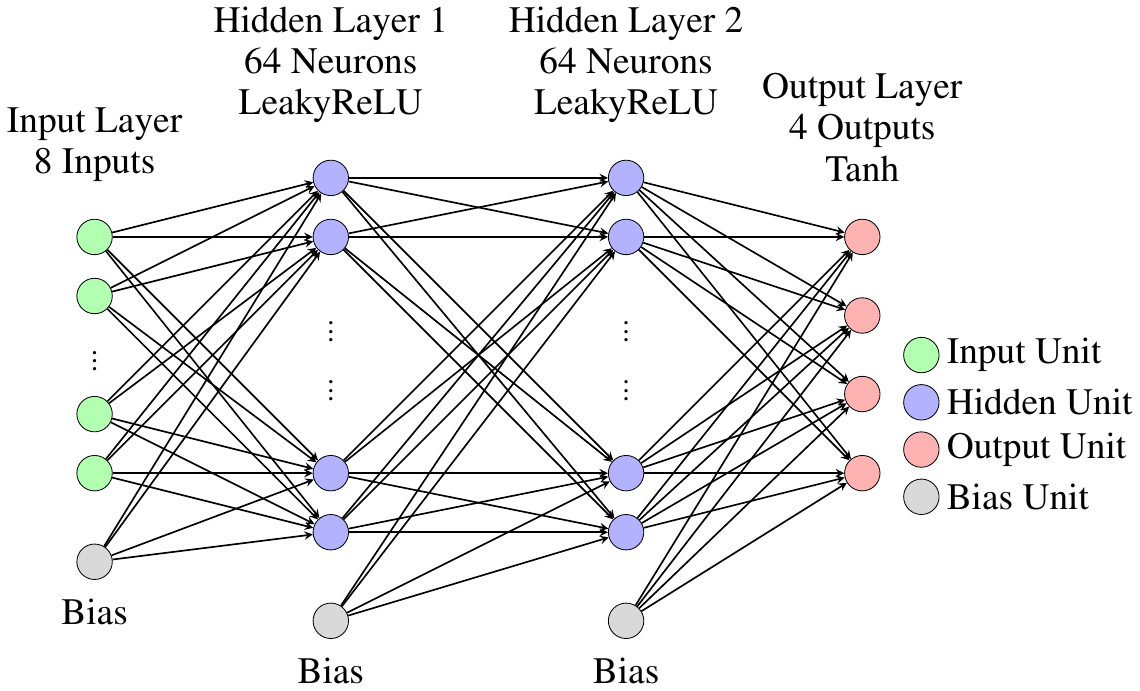}
    \caption{NN architecture.}
    \label{fig:NNArchitecture}
\end{figure}
In our simulation, we use a neural network (NN) to parameterize the deterministic policies of the four vehicles. The NN architecture is shown in Figure \ref{fig:NNArchitecture}, which consists of:
\begin{enumerate}
    \item An input layer with 8 neurons corresponding to the state variables.
    \item Two fully connected hidden layers, each with 64 neurons and LeakyReLU activation functions.
    \item A fully connected output layer producing 4 outputs, followed by a Tanh activation function to constrain the output within $[-1,1]$. The outputs are subsequently scaled by a factor of 9.81 to yield actions within the specified action space.
\end{enumerate}

Training is performed using gradient ascent \eqref{eq:gradientPlayTotal} till the policy reaches convergence. For each episode, the total rewards $J_i(\theta)$ is computed with a discount factor of $\gamma=0.99$.

Gradients with respect to the NN parameters are computed using automatic differentiation, and the parameters are updated using the Adam optimizer with a learning rate of 0.001. This training procedure is implemented using MATLAB’s deep learning tools \cite{matlabDLToolbox} (e.g., dlnetwork, dlfeval, and adamupdate).






\subsection{Evaluation Results in Specific Scenarios}
In this subsection, we evaluate the performance of the derived NE policies in two specific scenarios, with all vehicles following the NE policies.

\begin{figure}[htbp]
    \centering
    
    \begin{subfigure}[b]{0.49\columnwidth}
        \includegraphics[width=\columnwidth]{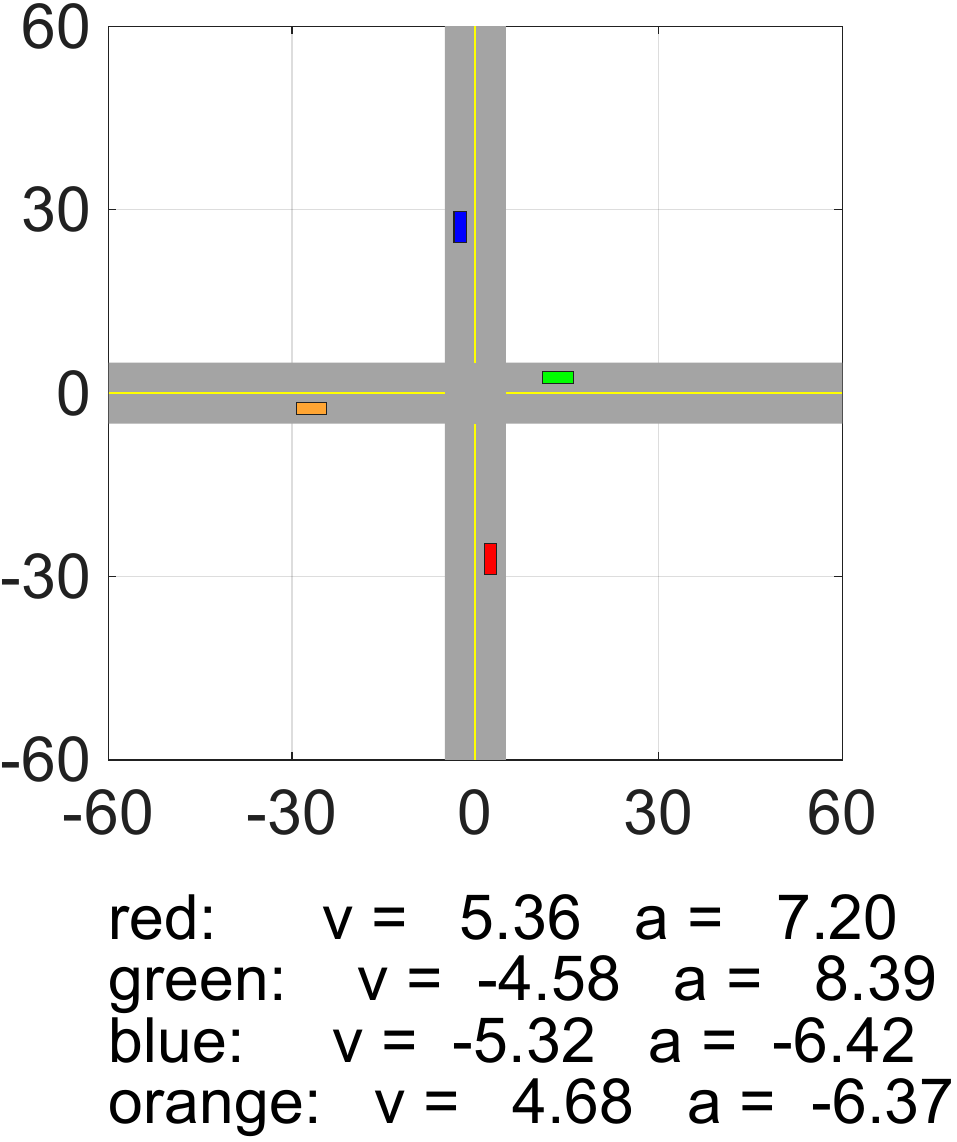}
        \caption{}
        \label{fig:Scenario11}
    \end{subfigure}
    \hfill
    \begin{subfigure}[b]{0.49\columnwidth}
        \includegraphics[width=\columnwidth]{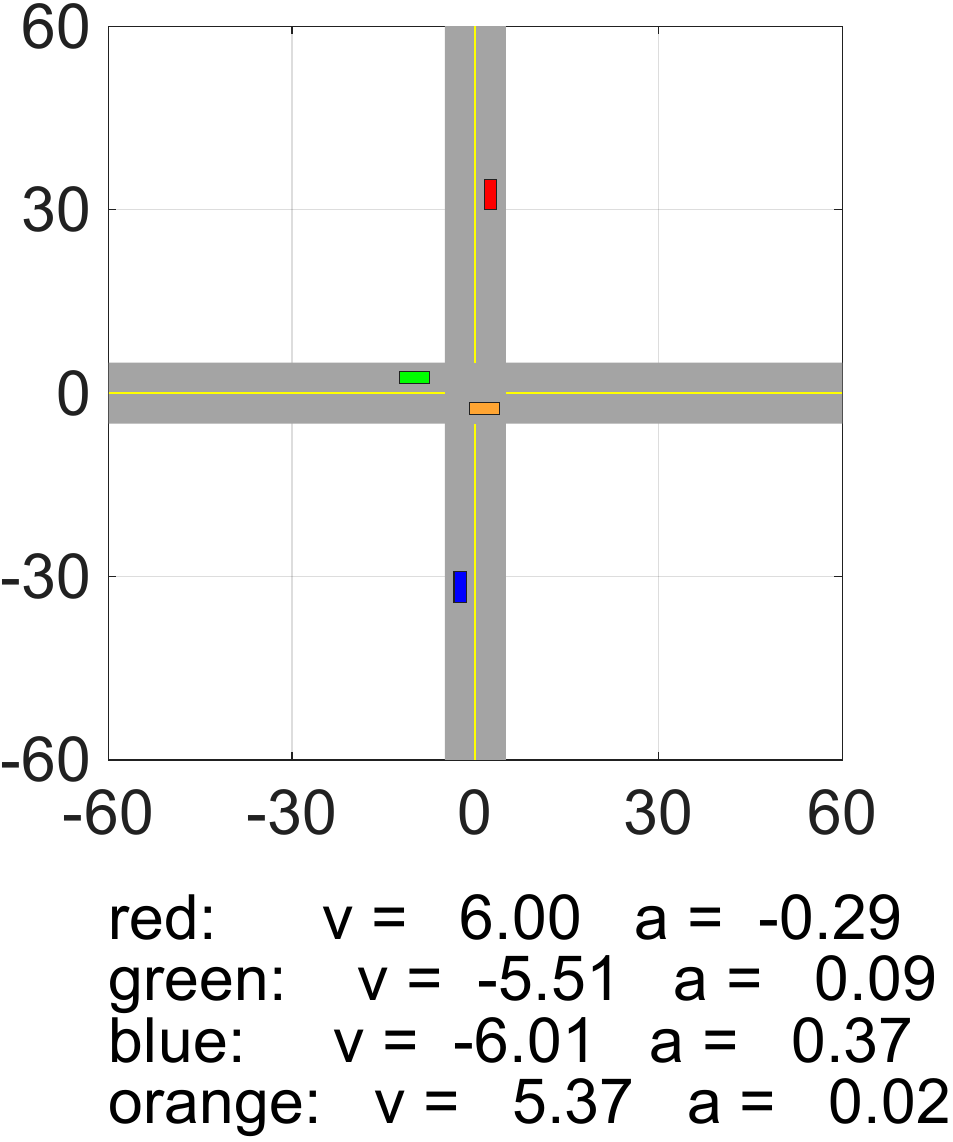}
        \caption{}
        \label{fig:Scenario12}
    \end{subfigure}
    
    \hfill
    
    \begin{subfigure}[b]{0.8\columnwidth}
        \includegraphics[width=\columnwidth]{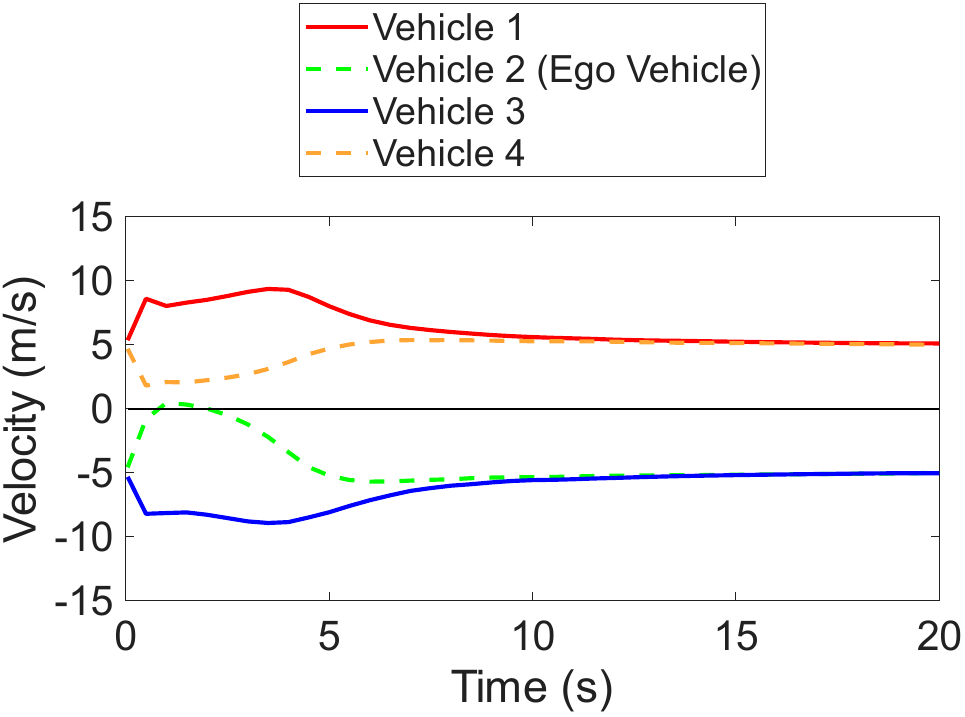}
        \caption{}
        \label{fig:Scenario1Velocity}
    \end{subfigure}
    
    \caption{Vehicles' performance in Scenario 1. (a): The ego vehicle decelerates and waits to avoid collisions; (b): The ego vehicle drives around the desired velocity after crossing; (c):  The velocity histories of all vehicles. }
    \label{fig:Scenario1all}
\end{figure}

\begin{table*}[htbp]
\centering
\caption{Statistical Results: MARL with MPGs}
\label{tab:statisticalResults}
\begin{tabular}{c|c|c|c} 
\hline
Surrounding vehicles' policies & NE & Rule-based policy &  Constant speed \\ \hline
Collision rate & 0/100 & 0/100 & 0/100 \\ \hline
Average ego speed (m/s) & 4.4168 & 3.5252 & 2.5530\\ \hline
\end{tabular}
\end{table*}


\begin{table*}[htbp]
\centering
\caption{Comparative Results: MARL vs. Single-agent RL}
\label{tab:comparativeResults2}
\begin{tabular}{c|c|c|c|c|c|c} 
\hline
 Solution method&\multicolumn{3}{c|}{MPG-based MARL}  &\multicolumn{3}{c}{Single-agent RL} \\ \hline
Surrounding vehicles' strategies & NE & Rule-based policy & Constant speed & NE & Rule-based policy & Constant speed\\ \hline
Collision rate & 0/100 & 0/100 & 0/100 & 11/100 & 0/100 & 45/100\\ \hline
Average ego speed (m/s) & 4.4061 & 3.5179 & 2.5526 & 2.8401 & 2.0595 & 1.7345\\ \hline
\end{tabular}
\end{table*}

\textit{Scenario 1:} In this scenario, we select the initial positions of the vehicles such that the ego vehicle is closer to the center of the intersection compared to the surrounding vehicles. In such a scenario, the ego vehicle executes a substantial deceleration and waits at the intersection to avoid potential collisions, and speeds up afterwards to its desired speed while crossing the intersection. Two key moments are shown in Figures (\ref{fig:Scenario11}) and (\ref{fig:Scenario12}). The velocity histories of all vehicles are shown in Figure (\ref{fig:Scenario1Velocity}). 

\begin{figure}[htbp]
    \centering
    
    \begin{subfigure}[b]{0.49\columnwidth}
        \includegraphics[width=\columnwidth]{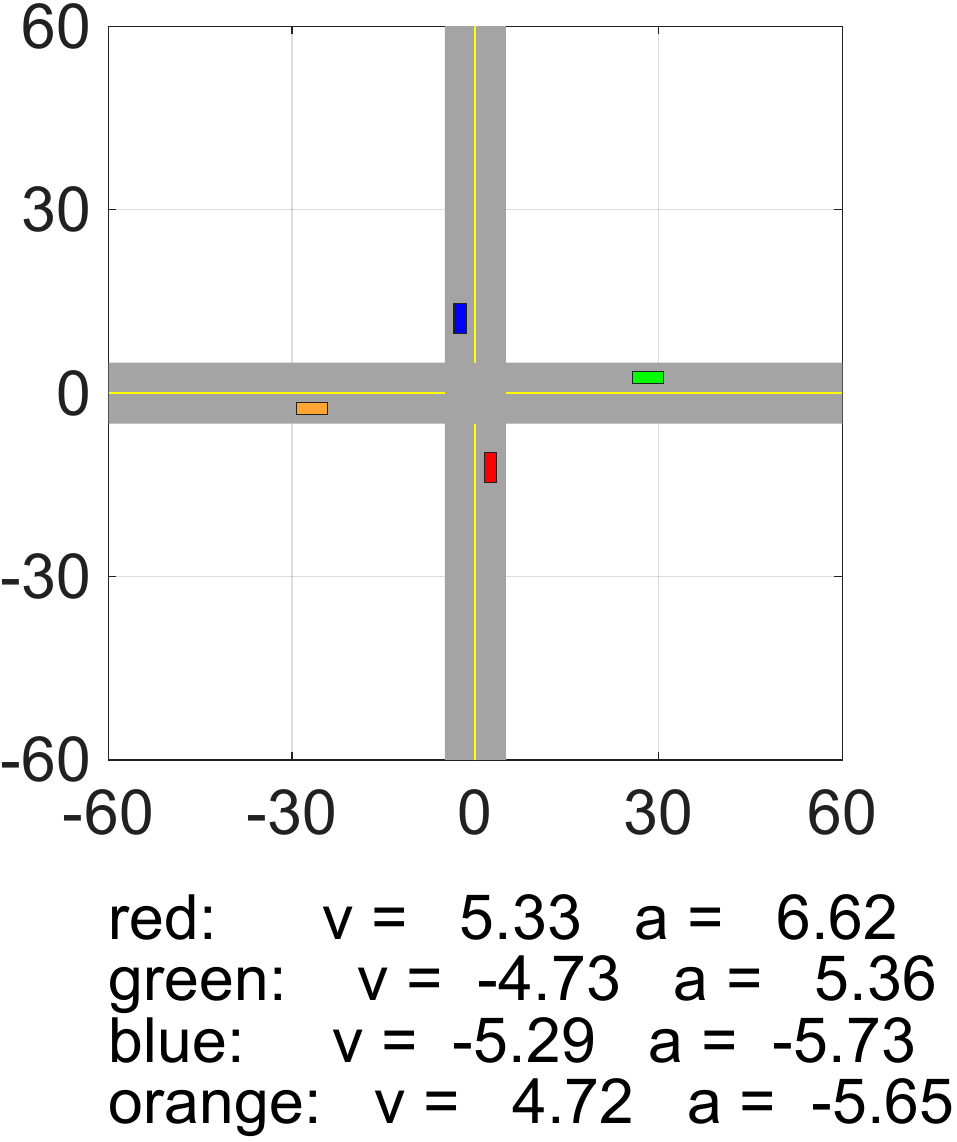}
        \caption{}
        \label{fig:Scenario21}
    \end{subfigure}
    \hfill
    \begin{subfigure}[b]{0.49\columnwidth}
        \includegraphics[width=\columnwidth]{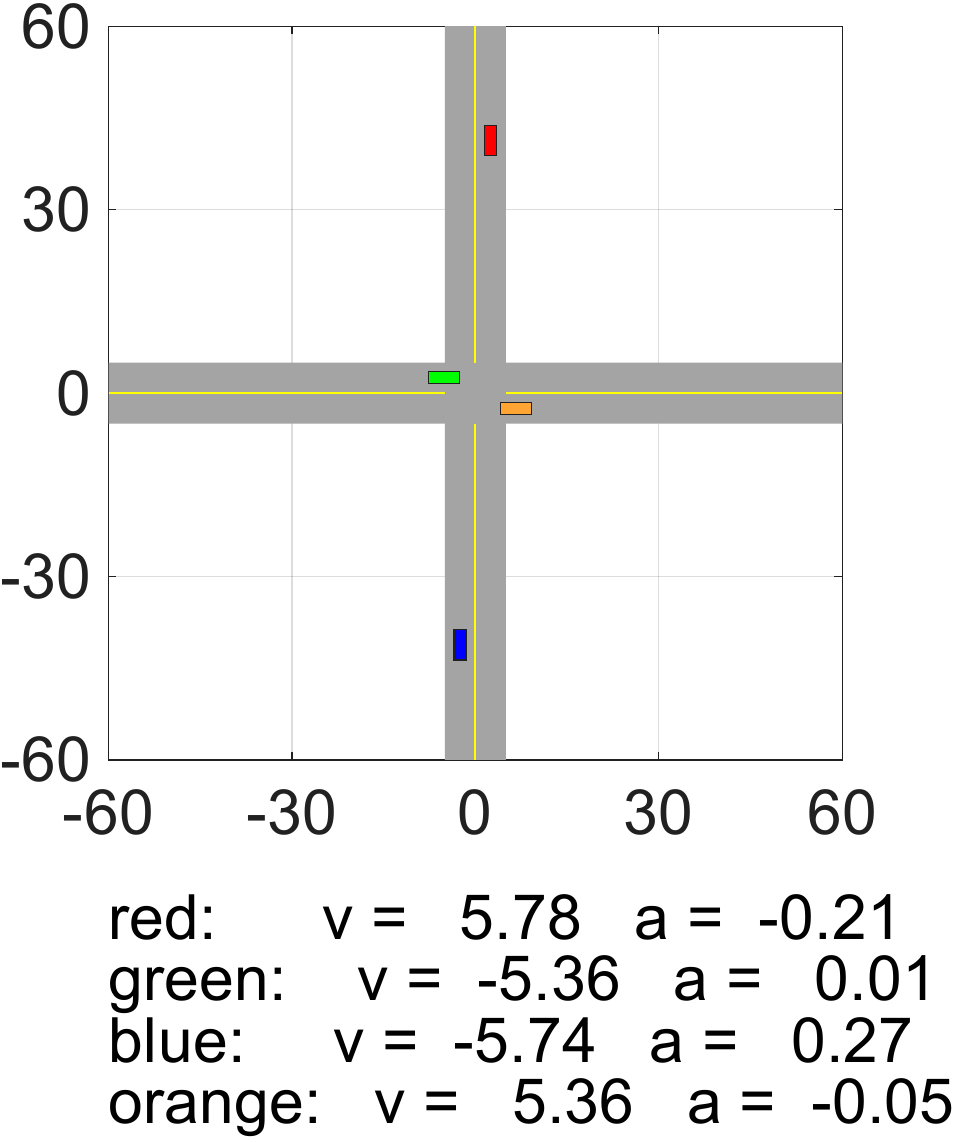}
        \caption{}
        \label{fig:Scenario22}
    \end{subfigure}
    
    \hfill
    
    \begin{subfigure}[b]{0.8\columnwidth}
        \includegraphics[width=\columnwidth]{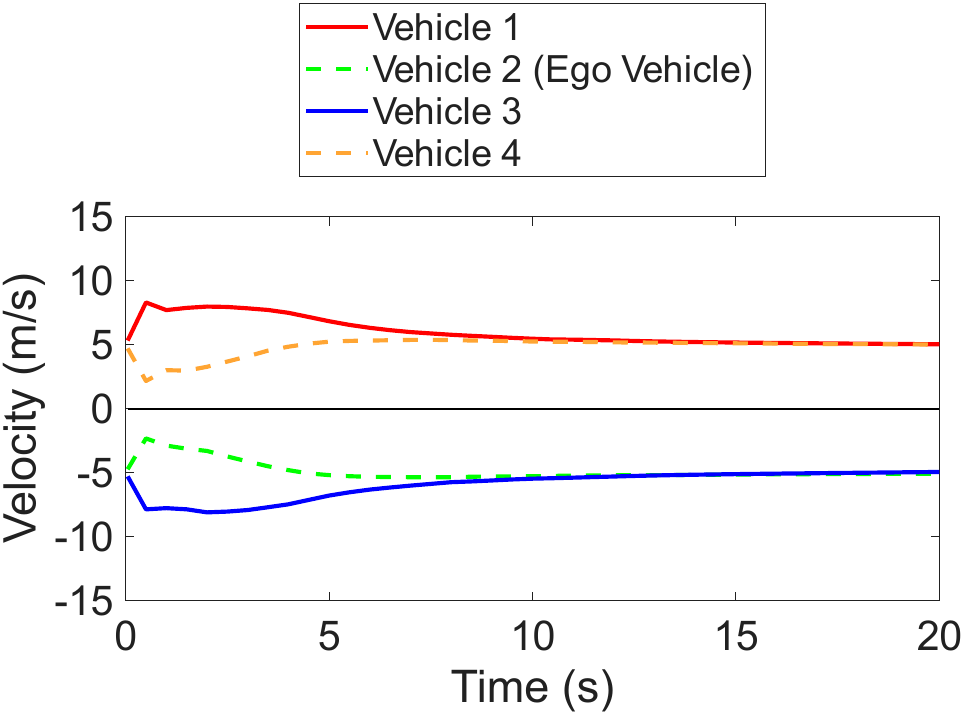}
        \caption{}
        \label{fig:Scenario2Velocity}
    \end{subfigure}
    
    \caption{Vehicles' performance in Scenario 2. (a): The ego vehicle decelerates to yield to the surrounding vehicles; (b): The ego vehicle drives around the desired velocity after crossing; (c):  The velocity histories of all vehicles. }
    \label{fig:Scenario2all}
\end{figure}

\textit{Scenario 2:} In this scenario, we select the initial positions of the vehicles such that vehicle ``1" and ``3", which have trajectory conflicts with the ego vehicle, are closer to the center of the intersection compared to the ego vehicle. In such a scenario, the ego vehicle first yields to vehicle ``1" and ``3" and then speeds up to cross the intersection after the surrounding vehicles have cleared the intersection.  Two key moments are shown in Figures (\ref{fig:Scenario21}) and (\ref{fig:Scenario22}). The velocity histories of all vehicles are shown in Figure (\ref{fig:Scenario2Velocity}).

\subsection{Evaluation Results in Statistical Studies}\label{Section4B}
We conduct statistical studies to comprehensively evaluate the performance of the MARL. 
To evaluate the robustness of the NE, we consider three possible policies for the surrounding vehicles: 1) NE policies, 2) a first-come-first-served rule-based policy, and 3) a constant speed policy. The first policy represents rational and intelligent decision-making. The second policy, while exhibiting some level of rationality, is notably less sophisticated than the first. The third policy is neither intelligent nor safety-conscious, yet it reflects extreme cases where drivers fail to react to dangers or potential collisions due to, e.g., distractions.

We test 100 scenarios with randomized initial states and collect the collision rate and average ego speed. The collision rate is the number of scenarios where a collision with the ego vehicle occurs during the test. The statistical results are shown in Table \ref{tab:statisticalResults}, which leads to the following observations:
\begin{enumerate}
    \item The NE enables the vehicles to cross the intersection safely: No collisions occur out of 100 scenarios when all vehicles use NE policies, when surrounding vehicles use rule-based policy, and even when the surrounding vehicles are safety-agnostic
    , demonstrating satisfying collision avoidance performance. 
    \item The NE enables the vehicles to efficiently cross the intersection, i.e., the vehicle's average speed is the closest to its desired speed, demonstrating satisfying travel efficiency while ensuring safety.
    \item The resulting NE is likely a local NE.
\end{enumerate}

\subsection{Evaluation Results in Comparative Studies}
Next we consider comparative studies on the performance of single-agent RL and MARL. In the single-agent RL, we let the surrounding vehicles take the rule-based policy and train the ego vehicle's optimal policy. For the MARL, we use the potential function optimization algorithm \eqref{eq:gradientPlayTotal}. We then test the two trained policies in three settings, respectively corresponding to the three surrounding vehicle policies. The results are shown in Table \ref{tab:comparativeResults2}. It is observed that compared to the single-agent RL, the MARL has better robustness in terms of lower collision rates when the surrounding vehicles perform unexpected policies (i.e., different from the ones used in the training) or are safety-agnostic.

\section{CONCLUSIONS}\label{Section5}
This paper studied MPGs  and MARL. MPGs have appealing properties that lead to the guaranteed performance of the MARL, including guaranteed pure NE existence, gradient play algorithm convergence, and attainability of the NE. We developed sufficient conditions for the MPG construction and proved that if the reward function and the MDP transition probability satisfy certain conditions, then the resulting MG is an MPG. Numerical results with applications to autonomous driving were reported. We found that the proposed reward design can accommodate the vehicles' driving objective design in general traffic scenarios, demonstrating the practicality of the developed MPG framework. Evaluation results suggest that the learned NE from MARL can enable safe and efficient autonomous vehicles in intersection-crossing scenarios and that the MARL has better robustness performance compared to single-agent RL against various surrounding vehicles' driving policies. More comprehensive evaluations in diverse traffic scenarios will be performed in future studies.  

\addtolength{\textheight}{-12cm}   



\bibliographystyle{IEEEtran}  
\bibliography{Reference}

@software{matlabDLToolbox,
title={Deep Learning Toolbox},
author={MathWorks},
publisher={The MathWorks Inc.},
address={Natick, Massachusetts, United States},
year={2024},
url={https://www.mathworks.com/products/deep-learning.html}
}

@inproceedings{globalConvergence,
title={Global Convergence of Multi-Agent Policy Gradient in Markov Potential Games},
author={Stefanos Leonardos and Will Overman and Ioannis Panageas and Georgios Piliouras},
booktitle={International Conference on Learning Representations},
year={2022},
url={https://openreview.net/forum?id=gfwON7rAm4}
}

@book{defNE,
author = {Shoham, Yoav and Leyton-Brown, Kevin},
title = {Multiagent Systems: Algorithmic, Game-Theoretic, and Logical Foundations},
year = {2008},
isbn = {0521899435},
publisher = {Cambridge University Press},
address = {USA},
abstract = {This exciting and pioneering new overview of multiagent systems, which are online systems composed of multiple interacting intelligent agents, i.e., online trading, offers a newly seen computer science perspective on multiagent systems, while integrating ideas from operations research, game theory, economics, logic, and even philosophy and linguistics. The authors emphasize foundations to create a broad and rigorous treatment of their subject, with thorough presentations of distributed problem solving, game theory, multiagent communication and learning, social choice, mechanism design, auctions, cooperative game theory, and modal logics of knowledge and belief. For each topic, basic concepts are introduced, examples are given, proofs of key results are offered, and algorithmic considerations are examined. An appendix covers background material in probability theory, classical logic, Markov decision processes and mathematical programming. Written by two of the leading researchers of this engaging field, this book will surely serve as THE reference for researchers in the fastest-growing area of computer science, and be used as a text for advanced undergraduate or graduate courses.}
}

@ARTICLE{humanRobotInteraction,
  author={Yasar, Mohammad Samin and Iqbal, Tariq},
  journal={IEEE Robotics and Automation Letters}, 
  title={A Scalable Approach to Predict Multi-Agent Motion for Human-Robot Collaboration}, 
  year={2021},
  volume={6},
  number={2},
  pages={1686-1693},
  keywords={Trajectory;Decoding;Collaboration;Robots;Predictive models;Skeleton;Acceleration;Human detection and tracking;human-robot collaboration;intention recognition},
  doi={10.1109/LRA.2021.3058917}}

@ARTICLE{transportationSystem,
  author={Liu, Mushuang and Tseng, H. Eric and Filev, Dimitar and Girard, Anouck and Kolmanovsky, Ilya},
  journal={IEEE Transactions on Control Systems Technology}, 
  title={Safe and Human-Like Autonomous Driving: A Predictor–Corrector Potential Game Approach}, 
  year={2024},
  volume={32},
  number={3},
  pages={834-848},
  keywords={Games;Decision making;Cost function;Safety;Vehicles;Autonomous vehicles;Scalability;Autonomous driving;decision-making;game theory;model predictive control;potential games (PGs)},
  doi={10.1109/TCST.2023.3332438}}

@ARTICLE{powerSystem,
  author={Zhu, Ziqing and Chan, Ka Wing and Bu, Siqi and Or, Siu Wing and Gao, Xiang and Xia, Shiwei},
  journal={IEEE Transactions on Power Systems}, 
  title={Analysis of Evolutionary Dynamics for Bidding Strategy Driven by Multi-Agent Reinforcement Learning}, 
  year={2021},
  volume={36},
  number={6},
  pages={5975-5978},
  keywords={Power system dynamics;Game theory;Multi-agent systems;Stability criteria;Reinforcement learning;Market power;multi-agent reinforcement learning;evolutionary game theory;bidding strategy},
  doi={10.1109/TPWRS.2021.3099693}}

@ARTICLE{markovGame5,
  author={Liu, Da and Dou, Liqian and Zhang, Ruilong and Zhang, Xiuyun and Zong, Qun},
  journal={IEEE Transactions on Vehicular Technology}, 
  title={Multi-Agent Reinforcement Learning-Based Coordinated Dynamic Task Allocation for Heterogenous UAVs}, 
  year={2023},
  volume={72},
  number={4},
  pages={4372-4383},
  keywords={Task analysis;Resource management;Heuristic algorithms;Vehicle dynamics;Dynamic scheduling;Scalability;Autonomous aerial vehicles;Coordinated dynamic task allocation;multi-agent reinforcement learning;heterogeneous unmanned aerial vehicles},
  doi={10.1109/TVT.2022.3228198}}

@ARTICLE{markovGame4,
  author={Zhou, Huan and Jiang, Kai and He, Shibo and Min, Geyong and Wu, Jie},
  journal={IEEE Transactions on Wireless Communications}, 
  title={Distributed Deep Multi-Agent Reinforcement Learning for Cooperative Edge Caching in Internet-of-Vehicles}, 
  year={2023},
  volume={22},
  number={12},
  pages={9595-9609},
  keywords={Computer architecture;Delays;Costs;Backhaul networks;Reinforcement learning;Quality of service;Optimization;Edge caching;Internet-of-Vehicles;content delivery;cache replacement;multi-agent reinforcement learning},
  doi={10.1109/TWC.2023.3272348}}

@ARTICLE{resourceAllocation,
  author={Du, Jianbo and Cheng, Wenjie and Lu, Guangyue and Cao, Haotong and Chu, Xiaoli and Zhang, Zhicai and Wang, Junxuan},
  journal={IEEE Transactions on Network Science and Engineering}, 
  title={Resource Pricing and Allocation in MEC Enabled Blockchain Systems: An A3C Deep Reinforcement Learning Approach}, 
  year={2022},
  volume={9},
  number={1},
  pages={33-44},
  keywords={Wireless communication;Multi-access edge computing;Simulation;Reinforcement learning;Pricing;Blockchains;Resource management;Asynchronous advantage actor-critic (A3C);blockchain;deep reinforcement learning;mobile edge computing;pricing;resource allocation},
  doi={10.1109/TNSE.2021.3068340}}

@ARTICLE{energyManagement,
  author={Yu, Liang and Xie, Weiwei and Xie, Di and Zou, Yulong and Zhang, Dengyin and Sun, Zhixin and Zhang, Linghua and Zhang, Yue and Jiang, Tao},
  journal={IEEE Internet of Things Journal}, 
  title={Deep Reinforcement Learning for Smart Home Energy Management}, 
  year={2020},
  volume={7},
  number={4},
  pages={2751-2762},
  keywords={Smart homes;Energy management;Temperature distribution;Internet of Things;Heuristic algorithms;Load modeling;Deep reinforcement learning (DRL);energy cost;energy management;energy storage systems (ESSs);heating, ventilation, and air conditioning (HVAC) systems;smart home;thermal comfort},
  doi={10.1109/JIOT.2019.2957289}}

@ARTICLE{robotics,
  author={Tsounis, Vassilios and Alge, Mitja and Lee, Joonho and Farshidian, Farbod and Hutter, Marco},
  journal={IEEE Robotics and Automation Letters}, 
  title={DeepGait: Planning and Control of Quadrupedal Gaits Using Deep Reinforcement Learning}, 
  year={2020},
  volume={5},
  number={2},
  pages={3699-3706},
  keywords={Deep learning;Motion planning;Legged locomotion;Optimization;Path planning;Legged robots;deep learning in robotics and automation;motion and path planning},
  doi={10.1109/LRA.2020.2979660}}

@InProceedings{existMARL2,
  title = 	 {Scalable Reinforcement Learning of Localized Policies for Multi-Agent Networked Systems},
  author =       {Qu, Guannan and Wierman, Adam and Li, Na},
  booktitle = 	 {Proceedings of the 2nd Conference on Learning for Dynamics and Control},
  pages = 	 {256--266},
  year = 	 {2020},
  editor = 	 {Bayen, Alexandre M. and Jadbabaie, Ali and Pappas, George and Parrilo, Pablo A. and Recht, Benjamin and Tomlin, Claire and Zeilinger, Melanie},
  volume = 	 {120},
  series = 	 {Proceedings of Machine Learning Research},
  month = 	 {10--11 Jun},
  publisher =    {PMLR},
  pdf = 	 {http://proceedings.mlr.press/v120/qu20a/qu20a.pdf},
  url = 	 {https://proceedings.mlr.press/v120/qu20a.html},
  abstract = 	 {We study reinforcement learning (RL) in a setting with a network of agents whose states and actions interact in a local manner where the objective is to find localized policies such that the (discounted) global reward is maximized. A fundamental challenge in this setting is that the state-action space size scales exponentially in the number of agents, rendering the problem intractable for large networks. In this paper, we propose a Scalable Actor Critic (SAC) framework that exploits the network structure and finds a localized policy that is an $O(\rho^\kappa)$-approximation of a stationary point of the objective for some $\rho\in(0,1)$, with complexity that scales with the local state-action space size of the largest $\kappa$-hop neighborhood of the network. }
}

@INPROCEEDINGS{nashDeepQ-network,
  author={Zhang, Yuli and Wang, Shangbo and Ma, Xiaoguang and Yue, Wenwei and Jiang, Ruiyuan},
  booktitle={2023 IEEE 26th International Conference on Intelligent Transportation Systems (ITSC)}, 
  title={Large-Scale Traffic Signal Control by a Nash Deep Q-network Approach}, 
  year={2023},
  volume={},
  number={},
  pages={4584-4591},
  keywords={Training;Process control;Reinforcement learning;Markov processes;Stability analysis;Real-time systems;Convergence},
  doi={10.1109/ITSC57777.2023.10422534}}

@INPROCEEDINGS{mdpBasis,
  author={Gao, Tianhan and Chen, Baicheng and Mi, Qingwei},
  booktitle={2022 International Conference on Artificial Intelligence in Information and Communication (ICAIIC)}, 
  title={A Survey of Markov Model in Reinforcement Learning}, 
  year={2022},
  volume={},
  number={},
  pages={284-287},
  keywords={Reinforcement learning;Markov processes;Mathematical models;History;Artificial intelligence;Markov model;Reinforcement Learning;Markov chain;Markov process},
  doi={10.1109/ICAIIC54071.2022.9722618}}

@ARTICLE{markovGame2,
  author={Yang, Ningkang and Han, Lijin and Liu, Rui and Wei, Zhengchao and Liu, Hui and Xiang, Changle},
  journal={IEEE Transactions on Transportation Electrification}, 
  title={Multiobjective Intelligent Energy Management for Hybrid Electric Vehicles Based on Multiagent Reinforcement Learning}, 
  year={2023},
  volume={9},
  number={3},
  pages={4294-4305},
  keywords={Batteries;Hybrid electric vehicles;Supercapacitors;Energy management;Optimization;Reinforcement learning;Q-learning;Hybrid electric vehicle (HEV);multiagent reinforcement learning (MARL);multiobjective energy management;Nash Q-learning},
  doi={10.1109/TTE.2023.3236324}}

@ARTICLE{singleAgentRL1,
  author={Wu, Jingda and Huang, Zhiyu and Lv, Chen},
  journal={IEEE Transactions on Intelligent Vehicles}, 
  title={Uncertainty-Aware Model-Based Reinforcement Learning: Methodology and Application in Autonomous Driving}, 
  year={2023},
  volume={8},
  number={1},
  pages={194-203},
  keywords={Autonomous vehicles;Adaptation models;Predictive models;Uncertainty;Virtual environments;Computational modeling;Task analysis;Model-based reinforcement learning;uncertainty awareness;adaptive truncation;autonomous driving},
  doi={10.1109/TIV.2022.3185159}}

@article{singleAgentRL2,
  title={Efficient deep reinforcement learning with imitative expert priors for autonomous driving},
  author={Huang, Zhiyu and Wu, Jingda and Lv, Chen},
  journal={IEEE Transactions on Neural Networks and Learning Systems},
  volume={34},
  number={10},
  pages={7391--7403},
  year={2022},
  publisher={IEEE}
}

@ARTICLE{markovGameinComputationOffloading,
  author={Wu, Guowen and Wang, Hui and Zhang, Hong and Zhao, Yuhan and Yu, Shui and Shen, Shigen},
  journal={IEEE Internet of Things Journal}, 
  title={Computation Offloading Method Using Stochastic Games for Software-Defined-Network-Based Multiagent Mobile Edge Computing}, 
  year={2023},
  volume={10},
  number={20},
  pages={17620-17634},
  keywords={Resource management;Games;Task analysis;Stochastic processes;Delays;Production;Energy consumption;Computation offloading;mobile edge computing (MEC);multiagent reinforcement learning (MARL);resource allocation;stochastic game},
  doi={10.1109/JIOT.2023.3277541}}

@ARTICLE{potentialgame,
  author={Liu, Mushuang and Kolmanovsky, Ilya and Tseng, H. Eric and Huang, Suzhou and Filev, Dimitar and Girard, Anouck},
  journal={IEEE Transactions on Intelligent Transportation Systems}, 
  title={Potential Game-Based Decision-Making for Autonomous Driving}, 
  year={2023},
  volume={24},
  number={8},
  pages={8014-8027},
  keywords={Games;Autonomous vehicles;Decision making;Cost function;Vehicle dynamics;Heuristic algorithms;Finite element analysis;Autonomous driving;game theory;predictive control},
  doi={10.1109/TITS.2023.3264665}}

@book{fudenberg1991game,
  title={Game theory},
  author={Fudenberg, Drew},
  year={1991},
  publisher={MIT press}
}

@article{agarwal2019reinforcement,
  title={Reinforcement learning: Theory and algorithms},
  author={Agarwal, Alekh and Jiang, Nan and Kakade, Sham M and Sun, Wen},
  journal={CS Dept., UW Seattle, Seattle, WA, USA, Tech. Rep},
  volume={32},
  pages={96},
  year={2019}
}

@ARTICLE{gradienPlayLiNa,
  author={Zhang, Runyu and Ren, Zhaolin and Li, Na},
  journal={IEEE Transactions on Automatic Control}, 
  title={Gradient Play in Stochastic Games: Stationary Points, Convergence, and Sample Complexity}, 
  year={2024},
  volume={69},
  number={10},
  pages={6499-6514},
  keywords={Game theory;Convergence;Stochastic processes;Nash equilibrium;Geometry;Complexity theory;Stability analysis;Multiagent systems;reinforcement learning},
  doi={10.1109/TAC.2024.3387208}}


\end{document}